\documentclass[11pt]{article}

\setlength{\oddsidemargin}{0in}
\setlength{\evensidemargin}{0in}
\headheight=0in
\headsep=0in
\setlength{\textwidth}{6.5in}
\textheight=8.7in

\usepackage{amsthm,amsmath,amssymb}
\usepackage{eucal}
\usepackage{xifthen}
\usepackage{mathtools}
\usepackage{enumerate}
\usepackage{microtype}
\usepackage{xspace}
\usepackage{bm}
\usepackage{cite}
\usepackage{fancyhdr}
\usepackage{lastpage}
\usepackage[small]{caption}
\usepackage{xcolor}
\usepackage{algorithm}
\usepackage{algpseudocode}
\allowdisplaybreaks

\setlength\unitlength{1mm}

\long\def\comment#1{}

% bb font symbols

\newfont{\bbb}{msbm10 scaled 700}

\newfont{\bb}{msbm10 scaled 1100}
\newcommand{\CC}{\mbox{\bb C}}

\newcommand{\RR}{\mbox{\bb R}}

% vectors and matrices
\newcommand{\mbs}[1]{\bm{#1}}
\newcommand{\vect}[1]{{\lowercase{\mbs{#1}}}}

\newcommand{\Pmatrix}[1]{\begin{array}{ll}#1\end{array}}

% transpose, hermitian transpose

\renewcommand{\Re}[1][]{\ifthenelse{\isempty{#1}}{\operatorname{Re}}{\operatorname{Re}\left(#1\right)}}
\renewcommand{\Im}[1][]{\ifthenelse{\isempty{#1}}{\operatorname{Im}}{\operatorname{Im}\left(#1\right)}}

\newcommand{\mv}{\vect{m}}

%% Bold greek letters

%\renewcommand{\Im}{\mat{i}}

% Calligraphic

\newcommand{\Ac}{{\mathcal A}}

\newcommand{\Hc}{{\mathcal H}}

\newcommand{\Pc}{{\mathcal P}}

\newcommand{\Wc}{{\mathcal W}}

\newcommand{\Xc}{{\mathcal X}}
\newcommand{\Yc}{{\mathcal Y}}

%

%
%
%% Sans Serif small case
%

% mixed symbols

\newcommand{\CN}[1][]{\ifthenelse{\isempty{#1}}{\mathcal{N}_{\mathbb{C}}}{\mathcal{N}_{\mathbb{C}}\left(#1\right)}}
\renewcommand{\P}[1][]{\ifthenelse{\isempty{#1}}{\mathbb{P}}{\mathbb{P}\left(#1\right)}}
\newcommand{\E}[1][]{\ifthenelse{\isempty{#1}}{\mathbb{E}}{\mathbb{E}\left(#1\right)}}
\renewcommand{\det}[1][]{\ifthenelse{\isempty{#1}}{\mathrm{det}}{\mathrm{det}\left(#1\right)}}
\newcommand{\trace}[1][]{\ifthenelse{\isempty{#1}}{\mathrm{tr}}{\mathrm{tr}\left(#1\right)}}
\newcommand{\rank}[1][]{\ifthenelse{\isempty{#1}}{\mathrm{rank}}{\mathrm{rank}\left(#1\right)}}
\newcommand{\diag}[1][]{\ifthenelse{\isempty{#1}}{\mathrm{diag}}{\mathrm{diag}\left(#1\right)}}

\DeclarePairedDelimiter\Abs{\lvert}{\rvert^2}

\renewcommand{\arg}{{\hbox{arg}}}

\renewcommand{\Re}{{\rm Re}}
\renewcommand{\Im}{{\rm Im}}

\allowdisplaybreaks
\newcommand{\st}{{\rm s.t.}}
\DeclareMathAlphabet{\mathcal}{OMS}{cmsy}{m}{n}

%%%%%%%%%%%%%%%%%%%%%%%%%%%%%%%%%%%%%%%%%%%%%%%
%% Theorem, Definition, ...
%%%%%%%%%%%%%%%%%%%%%%%%%%%%%%%%%%%%%%%%%%%%%%%

\newcommand{\defeq}{\triangleq}

%[section]
\newtheorem{remark}{Remark}%[section]
\newtheorem{definition}{Definition}%[section]
\newtheorem{theorem}{Theorem}%[section]
\newtheorem{example}{Example}%[section]
%[section]
%[section]
%[section]
\newtheorem{lemma}{Lemma}%[section]
%[section]

% Colors

%%%%%%%%%%%%%%%%%%%%%%%%%%%%%%%%%%%%%%%%%%
\usepackage{hyperref}
\hypersetup{
    bookmarks=true,         % show bookmarks bar?
    unicode=false,          % non-Latin characters in Acrobat�s bookmarks
    pdftoolbar=true,        % show Acrobat�s toolbar?
    pdfmenubar=true,        % show Acrobat�s menu?
    pdffitwindow=false,     % window fit to page when opened
    pdfstartview={FitH},    % fits the width of the page to the window
%    pdftitle={My title},    % title
%    pdfauthor={Author},     % author
%    pdfsubject={Subject},   % subject of the document
%    pdfcreator={Creator},   % creator of the document
%    pdfproducer={Producer}, % producer of the document
%    pdfkeywords={keyword1} {key2} {key3}, % list of keywords
    pdfnewwindow=true,      % links in new window
    colorlinks=true,       % false: boxed links; true: colored links
    linkcolor=red,          % color of internal links (change box color with linkbordercolor)
    citecolor=green,        % color of links to bibliography
    filecolor=magenta,      % color of file links
    urlcolor=cyan           % color of external links
}
%%%%%%%%%%%%%%%%%%%%%%%%%%%%%%%%%%%%%%%%%%%

\newcommand\blfootnote[1]{%
  \begingroup
  \renewcommand\thefootnote{}\footnote{#1}%
  \addtocounter{footnote}{-1}%
  \endgroup
}

\begin{document}
\date{}
\title{Opportunistic Treating Interference as Noise}
\author{ \normalsize Xinping Yi and Hua Sun\\
}
\maketitle

\blfootnote{ 
Xinping Yi (email: xinping.yi@liverpool.ac.uk) is with the Department of Electrical Engineering \& Electronics at University of Liverpool, UK.
Hua Sun (email: hua.sun@unt.edu) is with the Department of Electrical Engineering at University of North Texas, USA. 
}

\begin{abstract}
We consider a $K$-user interference network with $M$ states, where each transmitter has $M$ messages and over State $m$, Receiver $k$ wishes to decode the first $\pi_k(m) \in \{1,2,\cdots,M\}$ messages from its desired transmitter.
This problem of channel with states models opportunistic communications, where more messages are sent for better channel states. The first message from each transmitter has the highest priority as it is required to be decoded regardless of the state of the receiver; the second message is opportunistically decoded if the state allows a receiver to decode 2 messages; and the $M$-th message has the lowest priority as it is decoded if and only if the receiver wishes to decode all $M$ messages.
For this interference network with states, we show that if any possible combination of the channel states satisfies a condition under which power control and treating interference as noise (TIN) are sufficient to achieve the entire generalized degrees of freedom (GDoF) region of this channel state by itself, then a simple layered superposition encoding scheme with power control and a successive decoding scheme with TIN achieves the entire GDoF region of the network with $M$ states for all $KM$ messages.

\end{abstract}
\newpage

\section{Introduction}
\label{sec:intro}
Opportunistic communication refers to the opportunistic utilization of channel resources and the adaptation to network dynamics for efficient data transmission.
The early study in this regard dates back to downlink multiuser scheduling in time-varying wireless channels \cite{wc_book_tse,Pramod}. By opportunistically beamforming towards the user with the best channel, the base station exploits the multiuser diversity gain \cite{wc_book_tse} so as to maximize the overall system throughput \cite{Pramod}.
A similar idea has also been explored in cognitive radio systems for dynamic spectrum management \cite{devroye2008cognitive,goldsmith2009breaking}, in which the secondary users are assisted to access the spectrum licensed to the primary users opportunistically, in order to ensure efficient communication of secondary users without worsening the performance of primary users.

While existing opportunistic communication techniques are mainly placed at the transmitter side, the focus of this work is on opportunistic decoding at the receiver side, exploiting the benefits of varying decoding capabilities in dynamic networks.
When the channel condition is better, we wish to take this advantage and achieve a higher communication rate, while if the channel condition turns out to be bad, we will lower the expectation but a certain basic communication rate is still guaranteed. From the information theoretic perspective, this problem is typically modeled as communicating several message sets over a channel with states, where the base message set (corresponding to the basic communicate rate) must be transmitted successfully regardless of the state, and the opportunistic message set (corresponding to the higher communication rate) will also go through for a better channel state. Such formulations have been previously studied in the context of a single user slow fading channel with multiple antennas from a outage probability perspective (diversity-multiplexing tradeoff) \cite{Diggavi_Tse} and a two user bursty interference channel (where interference is not present for the better channel state) from an approximate capacity perspective \cite{Khude_Prabhakaran_Viswanath, Khude_Prabhakaran_Viswanath2}.

In this work, we go beyond two users and consider a general $K$-user Gaussian interference network, albeit with specific restrictions on the channel strength. In particular, we are interested in a broad regime where the simple and practical strategy of treating interference as noise (TIN) has been shown to be approximately optimal in the sense that the generalized degrees of freedom (GDoF) region is achieved by TIN \cite{Geng_TIN}. The optimality of TIN has since been explored beyond the regular interference channel, to $X$ message sets \cite{TIN-X} (where each transmitter has a message for each receiver), to the parallel channel setting \cite{Sun_Jafar_ParallelTIN, Sun_Jafar_ParallelTINRegion} (where each user pair is connected by a number of parallel channels), to the compound channel setting \cite{TIN-Compound} (where there is only one message for each user pair and the message must be reliably decoded regardless of the realization of the compound state), and to the interfering multiple access channel setting \cite{Aydin-TIN,TIN-IMAC} (where each receiver has one more paired transmitter carrying independent messages). Besides the characterization of GDoF regions, another important problem on power control has been considered in \cite{TIN-X,Yi-TIN}, where a number of low-complexity power allocation algorithms were proposed. Inspired by the TIN optimality conditions, efficient distributed link scheduling mechanisms were proposed in \cite{ITLinQ,ITLinQ+} for spectrum sharing in device-to-device communications, demonstrating an interesting translation from theory to practice.

\subsection{Motivating Example}
We are inspired by the observation that TIN naturally fits the opportunistic communication scenario, illustrated in the following example. Consider a 3 user interference channel with 2 states, as shown in Figure \ref{fig:ex}. In the first state, the network is fully connected and the channel strength for each link is depicted (the channel strength is measured in dB scale. For a detailed explanation, refer to the system model section). In the second state, each receiver only sees one interfering transmitter (due to, say, time-varying channel statistics), i.e., the red dashed interfering links are not present (e.g., Receiver 1 is interfered only by Transmitter 2, but not by Transmitter 3). Both states are in the regime where TIN is optimal \cite{Geng_TIN}. 

\begin{figure}[h]
\center
\includegraphics[width=5.5 in]{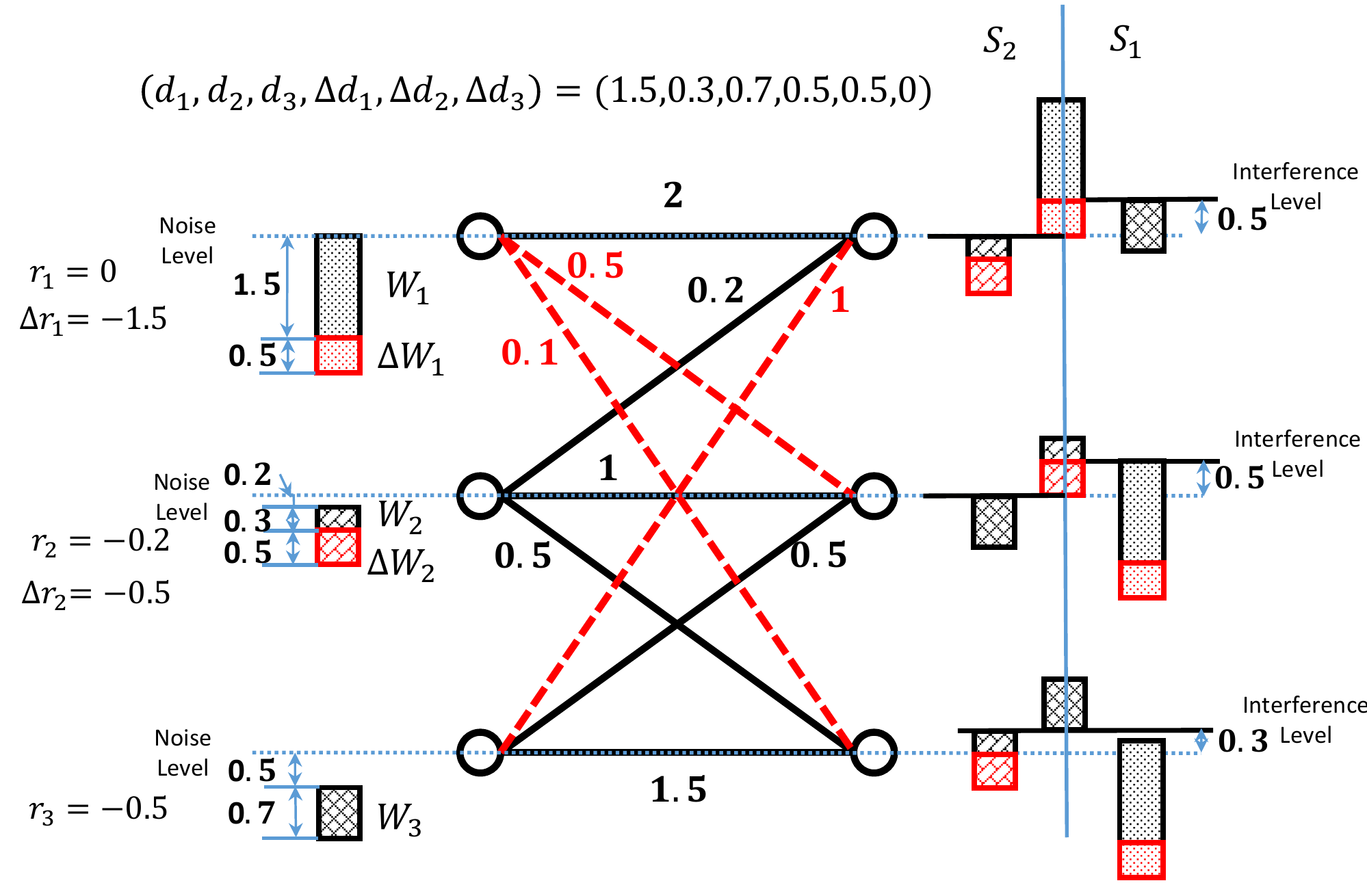}
\caption{\small A 3 user interference network with 2 states where the dashed red interfering links are not present in the second (better) state. Over the second state, the opportunistic message set $(\Delta W_1, \Delta W_2, \Delta W_3)$ is sent in addition to the base message set $(W_1, W_2, W_3)$. The transmitted power levels of the messages and the interference power levels are shown. At the receiver side, to the right of the blue vertical line (labelled as $S_1$), we have the interference power level for the fully connected state and to the left (labelled as $S_2$), we have the interference power level for the partially connected state. The exposed signal levels are exploited to send the opportunistic message set.}
\label{fig:ex}
\end{figure}

We wish to send 3 messages $(W_1, W_2, W_3)$ ($W_i$ for the $i$-th user pair) over the first state and the associated GDoF tuple for the messages is $(d_1, d_2, d_3) = (1.5, 0.3, 0.7)$.
A TIN scheme that achieves this GDoF tuple is shown in Figure \ref{fig:ex}, where the transmit power levels and the received interference power levels are explicitly shown (the power levels are measured in dB scale as well). For example, $W_2$ is sent at power level $-0.2$ so that it is received at Receiver 2 at power level $-0.2 + 1 = 0.8$ (where 1 is the channel strength from Transmitter 2 to Receiver 2) and it is received at Receiver 1 at power level $-0.2 + 0.2 = 0$. From Figure \ref{fig:ex}, it is easy to verify that the desired GDoF value is achieved at each receiver by TIN (the interference power level is lower than that of the desired message by the exact amount of the GDoF value).

Next we consider the performance of the same scheme over the second state (the better state with less interference). We notice that because some interfering links become missing, some signal levels that were occupied by interference are left interference-freely. For example, consider Receiver 2, where previously the interference power level was 0.5 (caused by Transmitter 1). Now as the interfering link from Transmitter 1 is not present, the interference power level drops to 0 (caused by Transmitter 3). In other words, the signal level from power 0 to 0.5 is now clean and we may naturally use this signal level to send the opportunistic message $\Delta W_2$ to achieve the GDoF value of $\Delta d_2 = 0.5$ (see the red tilted rectangle in Figure \ref{fig:ex}). Note that this will not influence the base message set as the exposed signal level is always lower than that of the base message set and the opportunistic message will not increase the interference power level at undesired receivers. Similarly, Transmitter 1 will send $\Delta W_1$ with the exposed signal level to achieve $\Delta d_1 = 0.3$ (see the red dotted rectangle in Figure \ref{fig:ex}). While for Receiver 3, its interference level is not decreased even if the interfering link from Transmitter 1 disappears, leaving no room for $\Delta W_3$ so that the opportunistic message for Transmitter 3 will not be sent. To decode the opportunistic message, each receiver first decodes the base message and then successively proceeds to decode the opportunistic message, both by TIN. To summarize, we have achieved the GDoF tuple of $(\Delta d_1, \Delta d_2, \Delta d_3) = (0.3, 0.5, 0)$ opportunistically.

From this example, we see that the key idea of our achievable scheme is to superpose the opportunistic message set over the base message set, using the largest power that is not exploited yet, to fulfill the interference-free signal level that is opportunistically exposed due to the decrease of interference strength. We may vary the power levels and the GDoF tuple for the base messages arbitrarily.
A natural question is: is this scheme - superposition encoding with power control and successive decoding with TIN - information theoretically optimal? We answer this question in the affirmative in this paper and explore the general channel conditions under which the proposed scheme is optimal.

\subsection{Main Contribution}
Interestingly, the natural scheme of superposition encoding and successive decoding with TIN is information theoretically optimal for a broad set of channel conditions and a broad class of message setting. Specifically, we consider a $K$-user interference network with $M$ states, where each transmitter has $M$ messages ordered by their importance (where the first message is the most important and the $M$-th message is the least important), and each receiver will decode the first $\pi \in \{1,2,\cdots,M\}$ messages ($\pi$ might differ across channel states and across receivers).

As the main result of this work, we show that if all subnetworks (given by the $K$ transmitters and $K$ receivers from possibly different states) of the $K$-user interference network satisfy the TIN-optimality condition identified in \cite{Geng_TIN}, then for arbitrary realizations of $\pi$ (arbitrary decoding thresholds across the states and the receivers), the simple scheme of layered superposition coding with TIN achieves the entire GDoF region.

\bigskip
We begin by defining the notations.

\underline{Notations}: For an integer $N$, we define $[N] \triangleq \{1,2,\ldots,N\}$. Given $n \in [N]$, we denote by $\{a(n)\}_n$ a set of $a(n)$ with all $n$, i.e., $\{a(n)\}_n \defeq \{a(1), a(2), \dots, a(N)\}$, and similarly $\{a(m,n)\}_{m,n}$ given $m \in [M]$ and $n \in [N]$ is a set with $MN$ elements, i.e., $\{a(m,n)\}_{m,n} \defeq \{a(1,1),  a(1,2),  \dots,  a(1,N),$  $a(2,1),  \dots,  a(M,N)\}$. {We also denote by $a([n_1:n_2])$ a subset of $a(n)$ with $n_1 \le n \le n_2$, i.e., $a([n_1:n_2]) \defeq \{a(n_1), a(n_1+1), \dots, a(n_2)\}$.}

\section{System Model}\label{sec:systemmodel}
\subsection{Gaussian Interference Network with States}
Consider the $K$-user single-antenna Gaussian interference network with $M$ states. The received signal for Receiver $k$ over the $t$-th channel use when the network falls into the $m$-th state is given by
\begin{equation}
\label{original}
Y_k^{[m]}(t) = \sum_{i=1}^{K}{h}_{ki}^{[m]} \tilde{X}_i(t) + {Z}_k^{[m]}(t),~~~\forall k \in [K], \forall m \in [M]
\end{equation}
where ${h}_{ki}^{[m]}$ is the channel coefficient from Transmitter $i$ to Receiver $k$ at the $m$-th state. The $K^2$-ary channel coefficients tuple at the $m$-th state $(\{h_{ki}^{[m]}\}_{k,i})$ is taken from a finite set $\Hc$, and is fixed within each state but can vary across states. The additive white Gaussian noise (AWGN) for Receiver $k$ over the $t$-th channel use ${Z}_k^{[m]}(t)$ has zero mean and unit-variance. The AWGN processes at all receivers are i.i.d over time.

The set of channel coefficients $\Hc$ over all $M$ states is available at all transmitters and receivers. Over different states, a possibly different set of messages is required to be communicated reliably (as detailed below). An interpretation\footnote{Equivalently, this channel model with states represents a multicast scenario where each state has a different set of $K$ receivers and the receivers across different states have different decoding requirements.} of this channel model with states is that the $M$ states represent the channel uncertainty at the transmitters. The transmitters know that the channels could be in any one of the $M$ states, but otherwise has no knowledge about which state the network falls into exactly. However, the transmitters wish to communicate opportunistically, i.e., if the network turns out to be in a better state, more messages are communicated. The receiver is aware of the exact state of the network and depending on the state, he will choose which set of messages to decode. A detailed description of the encoding and decoding operations is as follows.

\underline{Encoding}: Each Transmitter $i$ has a set of independent messages $\{W_i^{[m]}\}_{m=1}^M$, each of which is uniformly distributed over the message index set $\Wc_i^{[m]} \defeq \{1,2,\ldots,\lceil 2^{nR_i^{[m]}}\rceil\}$. These messages are jointly mapped to the codeword $ \{\tilde{X}_i(t)\}_{t=1}^n$ (abbreviated as $\tilde{X}_i^n \in \Xc_i^n$) that is transmitted over $n$ channel uses, and is subject to the average power constraint, $\sum_{t=1}^{n} \mathbb{E}\left [| \tilde{X}_i(t) |^2 \right] \leq n P_i$ where the expectation is over all the candidate messages. The message-to-codeword mapping for Transmitter $i$ ($i \in [K]$) is described by the following encoding function, 
\begin{align}
f_i: \textstyle \prod_{m=1}^M \Wc_i^{[m]} \mapsto \Xc_i^n .
\end{align}
Note that a single encoding mapping is used at each transmitter.

\underline{Decoding}: Suppose the channels are at the $m'$-th state.  For Receiver $k$, the received signal $ \{Y_k^{[m']}(t)\}_{t=1}^n$ (abbreviated as $Y_{k,m'}^{n} \in \Yc_k^{[m']}$)  is used to produce the estimates $\{\hat{W}_k^{[m]}\}_{m=1}^{\pi_k(m')}$ of the messages $ \{{W}_k^{[m]}\}_{m=1}^{\pi_k(m')}$. 
Among these messages, ${W}_k^{[1]}$ is referred to as the basic message that must be decoded at any state, and $\big\{{W}_k^{[m]}, m \in \{2,\dots,\pi_k(m')\}\big\}$ are the additional messages to be opportunistically decoded, referred to as ``opportunistic messages''. The total number of messages $\pi_k(m')$ to be decoded by Receiver $k$ at the $m'$-th state is fixed and globally known. 
$\pi_k(m')$ can be any number in $[M]$ so that $(\pi_k(1), \cdots, \pi_k(M)) \in [M]^M$. 
In other words, at State $m'$, from $Y_{k,m'}^{n}$ we need to decode messages ${W}_k^{[1]}, {W}_k^{[2]}, \cdots, {W}_k^{[\pi_k(m')]}$.
Thus, the decoding function at the $m'$-th state is given by
\begin{align}
g_k^{[m']}: \Yc_k^{[m']} \mapsto  \textstyle \prod_{m=1}^{\pi_k(m')} \Wc_k^{[m]}, \quad \forall m' \in [M].
\end{align}
{Note that the decoding functions $g_k^{[m']}$ can be distinct for different states $m'$. Fig. \ref{fig:decoding-fun} gives an example of a 3-user network with 3 states, where for Receiver 1 $\pi_1(1)=2,\pi_1(2)=1, \pi_1(3)=3$, for Receiver 2 $\pi_2(1)=1,\pi_2(2)=2, \pi_2(3)=3$, and for Receiver 3 $\pi_3(1)=3,\pi_3(2)=2, \pi_3(3)=1$. For each receiver, the basic message is always decodable at all states and in this case $(\pi_k(1),\pi_k(2), \pi_k(3) )$ is a permutation of $\{1,2,3\}$. 
\begin{figure}[h]
\center
\includegraphics[width=4.5 in]{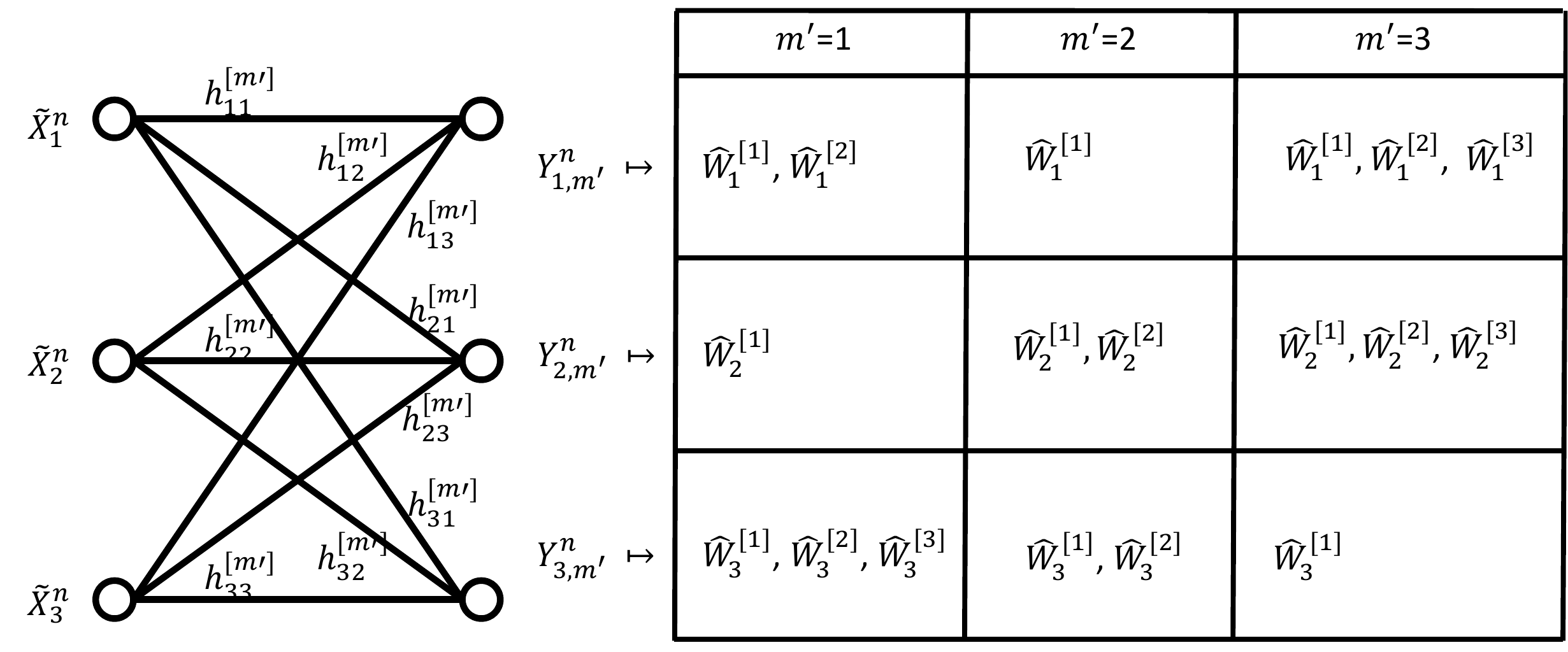}
\caption{\small A 3-user interference network with 3 states. Each transmitter $i$ has $3$ messages, $W_i^{[1]}, W_i^{[2]}, W_i^{[3]}$, to send. Over the first state, Receiver 1 needs to decode $W_1^{[1]}, W_1^{[2]}$, Receiver 2 needs to decode $W_2^{[1]}$, and Receiver 3 needs to decode $W_3^{[1]}, W_3^{[2]}, W_3^{[3]}$. The messages that each receiver needs to decode over the two remaining states are shown in the figure.}
\label{fig:decoding-fun}
\end{figure}

The average probability of error is defined as follows
\begin{align}
P_e^{(n)} = \Pr \left( \bigcup_{m'=1}^M \left\{ \big(\{W_{k}^{[1:\pi_k(m')]} \}_{k} \big) \neq \big(\{\hat{W}_{k}^{[1:\pi_k(m')]}\}_{k} \big) \right\} \right),
\end{align}
where we take the union of all $M$ states because decoding error of any state will result in an error event (i.e., we need to maintain reliable communication over all states), and at each state the error events of all basic and opportunistic messages for this state across all users are counted.

A rate tuple $(\{R_k^{[m]}\}_{k,m})$ is said to be achievable if we have a set of encoding $\{f_i\}_i$ and decoding functions $\{g_k^{[m]}\}_{k,m}$ such that $P_e^{(n)} \to 0$ as $n \to \infty$. The capacity region $\mathcal{C}$ is the closure of the set of all achievable rate tuples. 

\subsection{GDoF Framework} \label{sec:gdof-framework}
Following  \cite{Geng_TIN}, we now translate the channel model (\ref{original}) into an equivalent normalized form to facilitate GDoF studies. For such a purpose, we define $\tilde{X}_i(t) = \sqrt{P_i}{X}_i(t)$. Then over the $t$-th channel use, the received signal for Receiver $k$ at the $m$-th state is described by
\begin{align}
\label{now}
Y_k^{[m]}(t) &= \sum_{i=1}^{K}  h_{ki}^{[m]}  \sqrt{P_i} {X}_i(t) + {Z}_k^{[m]}(t) \\
&= \sum_{i=1}^{K} \sqrt{P^{\alpha_{ki}^{[m]}}} e^{j \theta_{ki}^{[m]}}  {X}_i(t) + {Z}_k^{[m]}(t)  \label{use}
\end{align}
where we take $P >1$ as a nominal power value, and define\footnote{As noted in \cite{Geng_TIN},  avoiding negative $\alpha$'s, will not influence the GDoF results.}
$\alpha_{ki}^{[m]} \triangleq \left( {\log \left(  \Abs{h_{ki}^{[m]} } P_i \right)} / {\log P} \right)^+.$
Now the power constraint becomes $\frac{1}{n}\sum_{t=1}^{n} \mathbb{E} \left[ | {X}_i(t) |^2 \right] \leq 1.$
As in \cite{Geng_TIN}, we call $\alpha_{ki}^{[m]}$ the channel strength level (exponent). The equivalent model (\ref{use}) will be used in the rest of this paper.

Next, we introduce the encoding function used in this work.

\begin{definition}[Simple Layered Superposition Coding] 
In simple layered superposition coding, the transmitted signal is produced by 
\begin{align}
X_i(t) = \sum_{m=1}^M X_i^{[m]}(t), \quad \forall i
\end{align}
where each message $W_i^{[m]}$ is separately encoded by an independent Gaussian codebook $\{X_i^{[m]}(t)\}_t$ with power $P^{r_i^{[m]}}$, i.e., $\E {[\Abs{X_i^{[m]}(t)}]}=P^{r_i^{[m]}}$ and then the codewords are added (superposed).
Further, we assume that the power decreases with the order of the message $m$,\footnote{It is worthy noting that the message order is not the same as the state index. Receiver $k$ at the $m$-th state is able to decode messages with order up to $\pi_k(m)$.} i.e., $0 \ge r_i^{[1]} \ge r_i^{[2]} \ge \dots \ge r_i^{[M]}$.\end{definition}

The encoded messages are superposed in a layered manner according to the power. For a power layer illustration, we put the basic message at the top layer, followed successively by the opportunistic messages of next orders, and the opportunistic message of order $M$ is layered at the bottom. Note that the above power allocation must satisfy the sum power constraint 
$\sum_{m=1}^M  P^{r_i^{[m]}} \le 1, \forall i$.

In this work, we consider the TIN setting and use a single set of decoding functions for all $\{g_{k}^{[m']}\}_{m'}$ (with parameters varying to conduct opportunistic decoding). We refer to this class of decoding functions as ``Opportunistic TIN'', defined as follows.

\begin{definition}[Opportunistic TIN] At the receiver side, opportunistic TIN is a successive interference cancelation based decoding rule where opportunistically the interference is treated as Gaussian noise. The basic message is first decoded while treating the interference caused by all opportunistic messages as Gaussian noise. As a sequel, the corresponding signal carrying the basic message can be reconstructed using the known channel state information at the receivers and then subtracted from the received signal. The residual received signal can be successively used to recover the lower layer opportunistic messages. Such a decoding-reconstructing-subtracting procedure repeats until the opportunistic messages of interest at the present state are successively recovered.
\end{definition}

Let us consider State $m'$, where Receiver $k$ is interested in decoding messages $\{W_k^{[m]}\}_{m=1}^{\pi_k(m')}$ while treating the remaining opportunistic messages $\{W_k^{[m]}\}_{m=\pi_k(m')+1}^{M}$ as noise.
The received signal at the $m'$-th state for Receiver $k$ can be rewritten as
\begin{align}
Y_k^{[m']}(t) &= \sum_{i=1}^{K} \sum_{m=1}^M  \sqrt{P^{\alpha_{ki}^{[m']}}} e^{j \theta_{ki}^{[m']}}  {X}_i^{[m]}(t) + {Z}_k^{[m']}(t).
\end{align}
The successive interference cancellation starts with the basic message $W_k^{[1]}$ where the interference from all opportunistic messages is treated as noise. After $W_k^{[1]}$ is decoded, the signal $X_k^{[1]}$ is reconstructed and subtracted from the received signal.
After applying $m-1, m \le \pi_k(m')$ rounds of successive interference cancellation, the messages $\{W_k^{[1]},\dots,W_k^{[m-1]}\}$ are successively decoded and the corresponding signals are subsequently subtracted. At the $m$-th round, the residual received signal can be written as
\begin{align}
\overline{Y}_k^{[m]}(t)  &= Y_k^{[m']}(t) -  \sum_{m''=1}^{m-1}  \sqrt{P^{\alpha_{kk}^{[m']}}} e^{j \theta_{kk}^{[m']}}  {X}_k^{[m'']}(t)   \\
&= \sqrt{P^{\alpha_{kk}^{[m']}}} e^{j \theta_{kk}^{[m']}}  {X}_k^{[m]}(t)
+ \sum_{m''=m+1}^{M}  \sqrt{P^{\alpha_{kk}^{[m']}}} e^{j \theta_{kk}^{[m']}}  {X}_k^{[m'']}(t) \\
& \qquad + \sum_{i=1, i \ne k}^{K} \sum_{m''=1}^M  \sqrt{P^{\alpha_{ki}^{[m']}}} e^{j \theta_{ki}^{[m']}}  {X}_i^{[m'']}(t) + {Z}_k^{[m']}(t).
\end{align}
Thus, the signal-to-interference-and-noise (SINR) ratio for the desired signal $X_k^{[m]}(t)$ is
\begin{align}
{\rm SINR}_{k}^{[m]} (m') &= \frac{P^{\alpha_{kk}^{[m']}+r_k^{[m]} }}{1 + \sum_{m'' \in [m+1:M]} P^{\alpha_{kk}^{[m']}+r_k^{[m'']} } + \sum_{i: i \neq k } \sum_{m'' \in [M]} P^{\alpha_{ki}^{[m']}}  P^{r_i^{[m'']}} }.
\end{align}
Then the achievable rate of $W_k^{[m]}$ is given by
\begin{align} \label{eq:min-rate}
R_k^{[m]} &= \min_{m' : \; \pi_k(m') \geq m} \left\{ \log \big(1+ {\rm SINR}_k^{[m]}(m') \big) \right\}\\
& =\log \Big(1+ \min_{m': \; \pi_k(m') \geq m} \big\{ {\rm SINR}_k^{[m]}(m') \big\} \Big)
\end{align}
where the $\min$ operation is to make sure $W_k^{[m]}$ can be reliably decoded at all states that are supposed to decode no less than $m$ messages, i.e., for all $m' \in [M]$ such that $\pi_k(m') \geq m$.
Therefore the GDoF $d_k^{[m]} = \lim_{P \rightarrow \infty} \frac{R_k^{[m]}}{\log P}$ is given by
\begin{align}
d_k^{[m]} &=  \max \left\{0, \min_{m': \; \pi_k(m') \geq m} \Big\{\alpha_{kk}^{[m']} + r_k^{[m]} - \max \big\{0, \alpha_{kk}^{[m']} + r_k^{[m+1]}, \max_{i: i \neq k} (\alpha_{ki}^{[m']} + r_i^{[1]}) \big\} \Big\} \right\}\\
&= \max\left\{0, \min \Big\{r_k^{[m]} - r_k^{[m+1]}, \min_{m': \; \pi_k(m') \geq m} \big\{\alpha_{kk}^{[m']} + r_k^{[m]} - \max \{0, \max_{i: i \neq k} (\alpha_{ki}^{[m']} + r_i^{[1]})\} \big \} \Big\} \right\} \label{eq:GDoF}
\end{align}
where the last step follows from the fact that $r_k^{[m]}$ is decreasing in $m$. 
{Note that $r^{[M+1]}_k, \forall k$ is an auxiliary power variable introduced to simplify the GDoF expression and it is convenient to interpret $r^{[M+1]}_k$ as a negative number in the range of $(-\infty , r_k^{[M]}]$ that represents the lowest power level used by the messages.
}

We define the GDoF region as
\begin{eqnarray}
\mathcal{D} \triangleq \left\{ (\{d_k^{[m]}\}_{k,m}) : d_k^{[m]} = \lim_{P \rightarrow \infty} \frac{R_k^{[m]}}{\log P}, \forall k \in [K], m \in [M], (\{R_k^{[m]}\}_{k,m}) \in \mathcal{C} \right\}.
\end{eqnarray}

\section{Main Result}
The main result of this work, stated in the following theorem, is that simple layered superposition coding and opportunistic TIN decoding is GDoF optimal under a broad set of channel conditions.

\begin{theorem}\label{thm:main}
Consider an $M$-state $K$-user single-antenna Gaussian interference channel with channel strength exponents $\{\alpha_{ij}^{[m]}\}_{i,j,m}$.
If the following condition
\begin{align} \label{eq:tin-optimal}
\alpha_{kk}^{[m_{k}]} \ge &\max_{j: j \neq k} \{ \alpha_{jk}^{[m_j]}\} + \max_{i: i \neq k} \{\alpha_{ki}^{[m_k]}\}, \nonumber \\
& \forall i, \; j, \; k \in [K], \quad \forall m_j,m_k \in [M]
\end{align}
is satisfied, then power control with simple layered superposition coding at the transmitters and opportunistic TIN at the receivers achieves the entire GDoF region, which includes all GDoF tuples $(\{d_k^{[m]}\}_{k,m}) \in \RR_+^{MK}$ satisfying
\begin{subequations} \label{eq:GDoF-region}
\begin{align}
\sum_{m=1}^{\pi_k(m')}  d_k^{[m]}  &\le \alpha_{kk}^{[m']} , \quad \forall m' \in [M], \; \forall k \in [K] \label{eq:e1}\\
 \sum_{k=1}^{k'}  \sum_{m=1}^{\pi_{i_k}(m_{i_k})} d_{i_k}^{[m]} &\le  \sum_{k=1}^{k'} (\alpha_{i_k i_k}^{[m_{i_k}]} - \alpha_{i_k i_{k+1}}^{[m_{i_k}]}), \label{eq:e2} \\
  \forall  (i_1, i_2, \dots, i_{k'}) \in \Pi_{k'},& \; \forall (m_{i_1}, m_{i_2}, \dots, m_{i_{k'}}) \in [M]^{k'}, \; \forall k' \in [K] \backslash \{1\},  
 \end{align}
 \end{subequations}
 where $\Pi_{k} \subseteq [K]$ is the collection of all possible cyclically ordered $k$-element subsets of user indicies without repetition, e.g., $\Pi_2=\{(1,2),(1,3),(2,3)\}$ and $\Pi_3=\{(1,2,3),(1,3,2)\}$, and $[M]^{k'}$ is a set with cardinality $M^{k'}$ collecting all possible $k'$-ary tuples, in which each coordinate is from $[M]$, e.g., $[2]^{3}=\{(1,1,1), (1,1,2), (1,2,1), (1,2,2), (2,1,1), (2,1,2), (2,2,1), (2,2,2)\}$. 
The number of messages decoded by Receiver $k$ at the $m$-th state  $\pi_k(m)$ is arbitrarily chosen from $[M]$, and is globally known {a priori}.
\end{theorem}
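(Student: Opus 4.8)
\emph{Structure of the proof.} Two statements must be established: (i) simple layered superposition coding with opportunistic TIN attains every GDoF tuple satisfying \eqref{eq:GDoF-region}, and (ii) no scheme at all exceeds \eqref{eq:GDoF-region}, so this polytope \emph{is} the GDoF region. I expect \eqref{eq:tin-optimal} to enter only in part (ii); part (i) should follow from \eqref{eq:GDoF-region} alone, just as power control for TIN in \cite{Geng_TIN} needs no channel-strength condition.

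\emph{Achievability (power control).} From the derivation of \eqref{eq:GDoF}, the scheme realizes the GDoF tuple determined by any admissible exponents $\{r_k^{[m]}\}$, and because $r_i^{[1]}\ge r_i^{[m]}$ for all $m$ the interference from Transmitter $i$ is always dominated by its base layer (the term $\alpha_{ki}^{[m']}+r_i^{[1]}$ in \eqref{eq:GDoF}). Given a target $(\{d_k^{[m]}\})$ in \eqref{eq:GDoF-region}, the plan is to set $r_k^{[m]}=-t_k-\sum_{\ell=1}^{m-1}d_k^{[\ell]}$ for base back-offs $t_k\ge 0$ to be chosen, so that each power gap $r_k^{[m]}-r_k^{[m+1]}$ equals $d_k^{[m]}$ and the sum-power constraint holds up to an $O(\log M/\log P)$ correction vanishing in the limit. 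Under this choice the requirement ``$d_k^{[m]}(\text{achieved})\ge d_k^{[m]}$ for all $k,m$'' collapses --- the tightest over $m$ being $m=\pi_k(m')$ since the left-hand side increases with $m$ --- to
\[
t_k + \sum_{\ell=1}^{\pi_k(m')} d_k^{[\ell]} \ \le\ \alpha_{kk}^{[m']} - \max\Bigl\{0,\ \max_{i \neq k}\bigl(\alpha_{ki}^{[m']} - t_i\bigr)\Bigr\}, \qquad \forall\, k,\ \forall\, m',
\]
and splitting the outer maximum turns this into a system of difference constraints $t_k-t_i\le \alpha_{kk}^{[m']}-\alpha_{ki}^{[m']}-\sum_{\ell\le\pi_k(m')}d_k^{[\ell]}$ (all $i\ne k$, all $m'$) together with $0\le t_k\le \alpha_{kk}^{[m']}-\sum_{\ell\le\pi_k(m')}d_k^{[\ell]}$. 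By the classical feasibility criterion, a valid $\{t_k\}$ exists iff the associated constraint graph on $\{0\}\cup[K]$ (vertex $0$ carrying both the nonnegativity and the single-user constraints) has no negative cycle. The remaining step is to verify this from \eqref{eq:GDoF-region}: a simple cycle among user vertices is nonnegative exactly by \eqref{eq:e2} for the corresponding cyclic order, while a simple cycle $0\to i_1\to\cdots\to i_r\to 0$ has total weight $\sum_{a}\alpha_{i_ai_a}^{[m_a]}-\sum_{a\ge2}\alpha_{i_ai_{a-1}}^{[m_a]}-\sum_a\sum_{\ell\le\pi_{i_a}(m_a)}d_{i_a}^{[\ell]}$ (for suitable state indices $m_a$), which by \eqref{eq:e2} applied to the \emph{reversed} cyclic order $(i_r,\dots,i_1)$ --- or by \eqref{eq:e1} when $r=1$ --- is at least $\alpha_{i_1 i_r}^{[m_1]}\ge 0$. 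Hence $\{t_k\}$ exists and every polytope point is attained exactly.

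\emph{Converse.} Inequality \eqref{eq:e1} is the usual single-user bound: handing Receiver $k$ all interfering messages in State $m'$ leaves a point-to-point link of capacity $\log(1+P^{\alpha_{kk}^{[m']}})+o(\log P)$, which must convey $\sum_{m\le\pi_k(m')}R_k^{[m]}$. For \eqref{eq:e2}, fix a cyclically ordered tuple $(i_1,\dots,i_{k'})$ and states $(m_{i_1},\dots,m_{i_{k'}})$, and pass to the auxiliary single-state $k'$-user interference channel in which Transmitter $i_k$ emits its actual codeword $\tilde X_{i_k}^n$ --- regarded as carrying the super-message $W_{i_k}^{[1:\pi_{i_k}(m_{i_k})]}$, with its remaining layers serving as stochastic dithering --- and Receiver $i_k$ observes $Y_{i_k}^{[m_{i_k}]}$, the messages of the out-of-cycle transmitters being supplied by a genie. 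Reliability forces $W_{i_k}^{[1:\pi_{i_k}(m_{i_k})]}$ to be decodable from $Y_{i_k}^{[m_{i_k}],n}$, so the genie-aided cyclic converse of \cite{Geng_TIN} applies to this auxiliary channel once its pertinent (and now mixed-state) TIN inequalities hold, namely $\alpha_{i_ki_k}^{[m_{i_k}]}\ge \alpha_{i_ki_{k+1}}^{[m_{i_k}]}+\alpha_{i_{k-1}i_k}^{[m_{i_{k-1}}]}$ for each $k$ --- these are precisely instances of \eqref{eq:tin-optimal} (take $j=i_{k-1}$ in its first maximum and $i=i_{k+1}$ in its second, which is exactly why \eqref{eq:tin-optimal} lets the two maxima range over independent state indices). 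The resulting bound, $\sum_k\sum_{m\le\pi_{i_k}(m_{i_k})}R_{i_k}^{[m]}\le\sum_k\bigl(\alpha_{i_ki_k}^{[m_{i_k}]}-\alpha_{i_ki_{k+1}}^{[m_{i_k}]}\bigr)\log P+o(\log P)$, normalizes to \eqref{eq:e2}.

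\emph{The main obstacle.} Achievability and the single-user bound are essentially mechanical once the reductions above are in place; the genuine difficulty is the converse. One must confirm that the chain of entropy inequalities behind the cyclic genie bound of \cite{Geng_TIN} goes through when each receiver's channel vector is taken from a different state (a ``Frankenstein'' sub-network) and when transmitters carry extra layers that are not decoded at that cycle's receivers, and that \eqref{eq:tin-optimal} supplies \emph{exactly} the mixed-state TIN inequalities the argument feeds on. I expect this to require re-running the genie argument rather than invoking it as a black box, keeping careful track of the normalization $\tilde X_i=\sqrt{P_i}X_i$ so that the powers $P_i$ absorbed into $\alpha_{ki}^{[m]}=\bigl(\log(|h_{ki}^{[m]}|^2P_i)/\log P\bigr)^+$ remain consistent across the distinct states that coexist within a single cyclic bound.
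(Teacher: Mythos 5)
Your proposal is correct and follows essentially the same route as the paper: the achievability reduces power control to a system of difference constraints on the base-layer back-offs whose feasibility is certified by non-negativity of all cycles in the constraint graph (the paper phrases this as the potential theorem on a labeled multi-digraph, which is the same criterion), and the converse constructs the $M^K$ mixed ("Frankenstein") states, argues decodability there via channel marginals, and invokes the TIN-optimal cyclic bounds of \cite{Geng_TIN} on each mixed state, with \eqref{eq:tin-optimal} being exactly the per-mixed-state TIN condition. The points you flag as delicate (undecoded layers acting as stochastic encoding, state-dependent normalization) are handled implicitly in the paper by treating the decoded layers as a single virtual message per user and applying \cite{Geng_TIN} as a black box.
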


\begin{remark} \normalfont
\label{remark:remark1}
The TIN optimality condition \eqref{eq:tin-optimal} and the GDoF region \eqref{eq:GDoF-region} have an intuitive interpretation.
Let us denote by $\tilde{\mv}=(m_{1}, m_{2}, \dots, m_{K})$ a channel state where Receiver $k$ falls into State $m_{k} \in [M]$. In this way, we have constructed in total $M^K$ states (in addition to the $M$ original states defined in the system model, we further have $M^K - K$ \emph{mixed} states where the receivers belong to different original states). As the capacity region only depends on marginals, these additional mixed states do not hurt the capacity (a detailed argument appears in Lemma \ref{lemma:aux-state-decode}).
Now \eqref{eq:tin-optimal} says that the TIN optimality condition for regular interference channel \cite{Geng_TIN} should hold for \emph{every single one of the $M^K$ states} and \eqref{eq:GDoF-region} is the collection of inequalities that constitute the GDoF region for \emph{each individual state}. A concrete illustration appears in Example \ref{ex:main-ex}.
\end{remark}

\begin{remark} \normalfont
The compound setting studied in \cite{TIN-Compound} is a special case of ours.
By letting $\pi_k(m)=1$ for all $k, m$, all receivers are supposed to decode only the basic messages over all states, and our system model reduces to the compound setting in \cite{TIN-Compound}. Setting $d_k^{[m]}=0$ for all $m \ge 2$, the GDoF region in \eqref{eq:GDoF-region} recovers that in \cite{TIN-Compound}.
\end{remark}

\begin{remark}\normalfont
\label{remark:power-control}
A natural choice of the number of messages to decode at a given state, $\pi_k(m')$ is
$\pi_k(m') := \left|\Big\{m \in [M]:  \alpha_{kk}^{[m]} -\max_{j: j \ne k} \{ \alpha_{kj}^{[m]}\} < \alpha_{kk}^{[m']} -\max_{j: j \ne k} \{ \alpha_{kj}^{[m']}\}\Big\}\right|+1,$
where we use $\big\{\alpha_{kk}^{[m]} -\max_{j: j \ne k} \{ \alpha_{kj}^{[m]}\}\big\}$ to reflect the TIN decoding capability for Receiver $k$ at State $m$. 

Subject to this choice of $\pi_k(m')$,
the transmit power exponents $\{r_k^{[m]}\}_{k,m}$ can be computed as follows. 
Let 
$m_k := \arg \max_{m \in [M]} \Big\{ \alpha_{kk}^{[m]} -\max_{j: j \ne k} \{ \alpha_{kj}^{[m]} \} \Big\}, % 
\forall k \in [K],$
and consider an auxiliary interference network where Receiver $k, k\in [K]$ is statistically equivalent to that at State $m_k$.
Next, given a feasible GDoF tuple $(\{d_k^{[m]}\}_{k,m})$, the optimal power allocation exponents of the basic messages $\{r_k^{[1]}\}_{k}$ can be obtained by applying the power control algorithms in \cite{TIN-Compound, Yi-TIN} to the auxiliary interference network 
with the GDoF tuple 
$(\sum_{m=1}^{M}d_1^{[m]}, \sum_{m=1}^{M}d_2^{[m]}, \dots, \sum_{m=1}^{M}d_K^{[m]}).$
The power exponents of the opportunistic messages can then be successively computed according to
$r_k^{[m+1]} = r_k^{[m]} - d_k^{[m]},
m=1, \dots, M-1.$
\end{remark}

In what follows, we consider a typical example to illustrate our result and remarks.

\begin{example} \normalfont
\label{ex:main-ex}
We hereby consider a 3-user interference channel with 2 states as shown in Fig. \ref{fig:ex3}.  For the sake of notational clarity, we denote by $S_1$ and $S_2$ two states respectively, by $W_k$ and $d_k=d_k^{[1]}$ the basic message and its GDoF, respectively, and by $\Delta W_k$ and $\Delta d_k=d_k^{[2]}$ the opportunistic message and its GDoF, respectively. The transmitted signal is produced by using simple layered superposition coding of the messages $W_k$ and $\Delta W_k$ with respective power exponent $r_k$ and $\Delta r_k$.
\begin{figure}[h]
\center
\includegraphics[width=4.5 in]{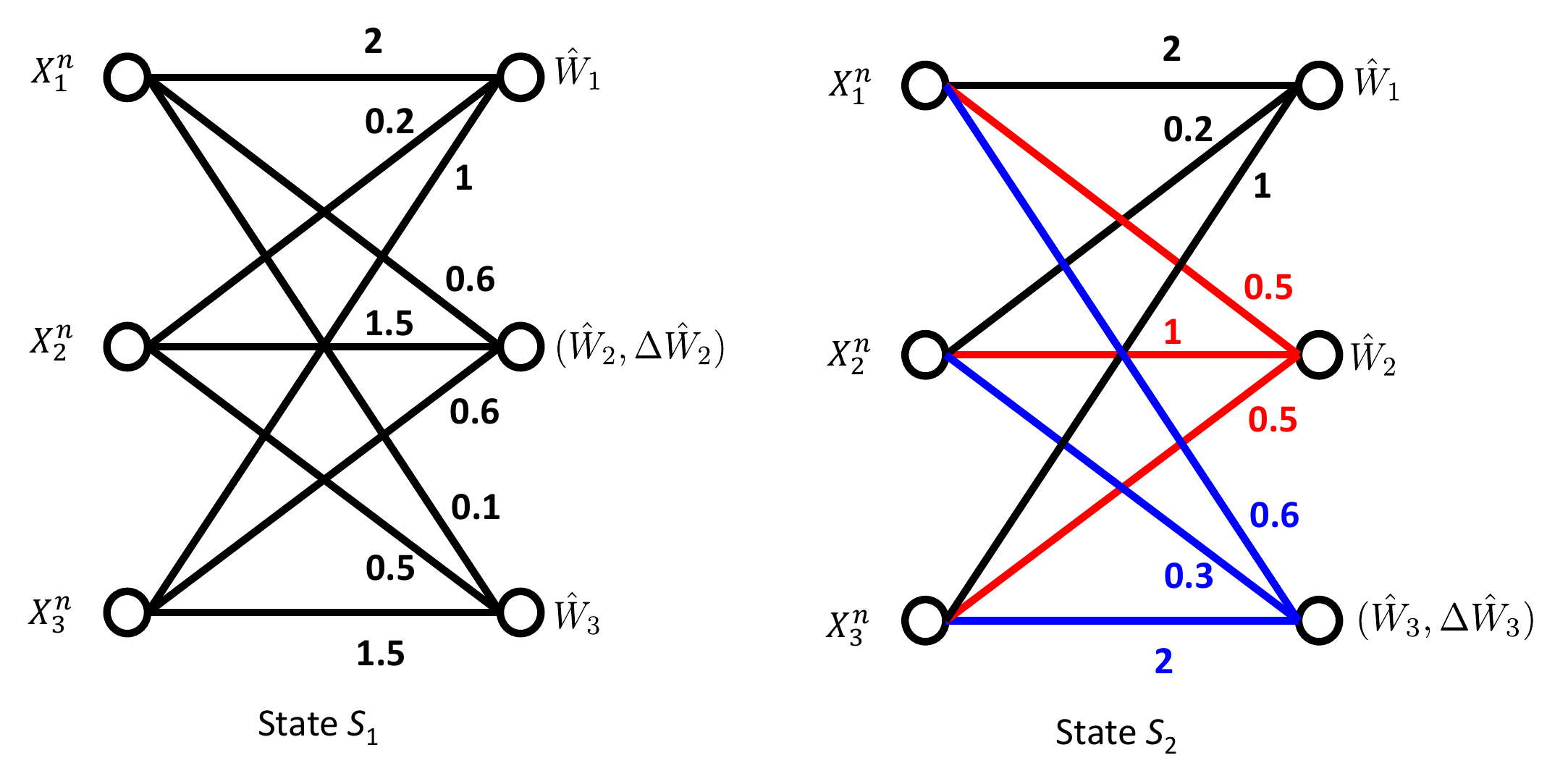}
\caption{\small A 3-user interference network with 2 states.}
\label{fig:ex3}
\end{figure}

We use the choice of $\pi_k(m')$ as stated in Remark \ref{remark:power-control} (Theorem \ref{thm:main} holds for any choice. We pick a specific choice here to illustrate the result). That is, we compute the difference of the desired signal strength and the strong interference strength level for each receiver at each state, as follows.
\begin{subequations}
\begin{align}
&\text{Receiver 1: } \quad \; \; \quad  2-\max\{0.2,1\} = 2 - \max\{0.2, 1\}, \\
&\text{Receiver 2: } \quad 1.5-\max\{0.6,0.6\} \ge 1 - \max\{0.5, 0.5\}, \\ &\text{Receiver 3: } \quad 1.5 - \max\{0.1,0.5\} \le 2 - \max\{0.6,0.3\}.
\end{align}
\end{subequations}
According to the relative strength of the two states for each receiver, we set
\begin{align}
\pi_1 (1)=1, \quad \pi_2 (1)=2, \quad \pi_3 (1)=1, \\
\pi_1 (2)=1, \quad \pi_2 (2)=1, \quad \pi_3 (2)=2,
\end{align}
where for example, $\pi_2 (1)=2$ because for Receiver 2, the first state is in a better condition such that we wish to decode both the basic and the opportunistic messages.

Next, we check the TIN-optimality condition and find the GDoF region. To this end, following Remark \ref{remark:remark1}, we construct two auxiliary states $S'_{:,1,2}$ and $S'_{:,2,1}$. Note that the channels to Receiver 1 remain the same across the two states.
As such, the original and auxiliary states with respect to channel strength exponents are given by
\begin{subequations}
\begin{align}
S_1 &:= \{2, 0.2, 1; 0.6,1.5,0.6; 0.1,0.5,1.5\}\\
S'_{:,1,2} &:= \{2, 0.2, 1; 0.6,1.5,0.6; 0.6,0.3,2\}\\
S'_{:,2,1} &:= \{2, 0.2, 1; 0.5,1,0.5; 0.1,0.5,1.5\}\\
S_2 &:= \{2, 0.2, 1; 0.5,1,0.5; 0.6,0.3,2\}.
\end{align}
\end{subequations}

It can be verified that the TIN-optimality condition is satisfied for every original and auxiliary channel state, so the TIN optimality condition \eqref{eq:tin-optimal} holds for our setting. Thus, Theorem \ref{thm:main} applies. According to \eqref{eq:GDoF-region}, after removing the redundant inequalities, we have the optimal GDoF region
\begin{align*}
d_1 \ge 0, \; d_2 \ge 0, \; d_3 &\ge 0\\
\Delta d_1 \ge 0, \; \Delta d_2 \ge 0, \; \Delta d_3 &\ge 0\\
d_1 &\le 2 \\
d_2 &\le 1 \\
d_2 + \Delta d_2  &\le 1.5 \\
d_3 &\le 1.5\\
d_3 + \Delta d_3 &\le 2 \\
d_1 + d_2 &\le 2.3 \\
d_1 + d_2 + \Delta d_2 &\le 2.7 \\
d_1 + d_3 + \Delta d_3 &\le 2.4 \\
d_2 + d_3 &\le 1.5 \\
d_2  + d_3 + \Delta d_2 &\le 1.9 \\
d_2 + d_3 + \Delta d_3 &\le 2.2 \\
d_2 + d_3 + \Delta d_2 + \Delta d_3 &\le 2.6 \\
d_1+d_2+d_3 &\le 2.5\\
d_1 + d_2 + d_3 + \Delta d_2 &\le 2.9 \\
d_1 + d_2 + d_3 + \Delta d_3 &\le 3.2\\
d_1 + d_2 + d_3 + \Delta d_2 + \Delta d_3 &\le 3.6.
\end{align*}

As stated in Remark \ref{remark:remark1}, the above optimal GDoF region can also be obtained by collecting the individual and sum GDoF inequalities from the GDoF region of all original and auxiliary states. For each state, we have 
\begin{align}
&\Pmatrix{S_1:  \left\{\Pmatrix{d_1 \le 2 \\ d_2+\Delta d_2 \le 1.5 \\ d_3 \le 1.5 \\d_1+d_2+\Delta d_2 \le 2.7 \\ d_1+d_3 \le 2.4 \\ d_2+\Delta d_2+d_3 \le 1.9 \\ d_1+d_2+\Delta d_2+d_3 \le 2.9}\right.}
\Pmatrix{S_2:  \left\{\Pmatrix{d_1 \le 2 \\ d_2 \le 1 \\ d_3+\Delta d_3 \le 2 \\d_1+d_2 \le 2.3 \\ d_1+d_3+\Delta d_3 \le 2.4 \\ d_2+d_3+\Delta d_3 \le 2.2 \\ d_1+d_2+d_3+\Delta d_3 \le 3.2}\right.}\\
&\Pmatrix{S'_{:,1,2}:  \left\{\Pmatrix{d_1 \le 2 \\ d_2+\Delta d_2 \le 1.5 \\ d_3+\Delta d_3 \le 2 \\d_1+d_2+\Delta d_2 \le 2.7 \\ d_1+d_3+\Delta d_3 \le 2.4 \\ d_2+\Delta d_2+d_3+\Delta d_3 \le 2.6 \\ d_1+d_2+\Delta d_2+d_3+\Delta d_3 \le 3.6}\right.}
\Pmatrix{S'_{:,2,1}:  \left\{\Pmatrix{d_1 \le 2 \\ d_2 \le 1 \\ d_3 \le 1.5 \\d_1+d_2 \le 2.3 \\ d_1+d_3 \le 2.4 \\ d_2+d_3 \le 1.5 \\ d_1+d_2+d_3 \le 2.5}\right.}.
\end{align}
It is easy to check that the collection of all above inequalities gives us the final optimal GDoF region.

{The achievability of the above GDoF region can be verified by checking the existence  of power exponents $r_k$'s for all extreme points. 
For instance, the GDoF tuple $(d_1, d_2, d_3, \Delta d_2, \Delta d_3) = (2,0.3,0.2,0.4,0.2)$ is one of the nontrivial extreme points.
Following Remark \ref{remark:power-control}, we consider the auxiliary state $S'_{:,1,2}$ which has the maximum TIN decoding capability at each receiver.  Applying the power control algorithms in \cite{TIN-Compound, Yi-TIN} to State $S'_{:,1,2}$ with the GDoF tuple $(d_1, d_2+\Delta d_2, d_3+\Delta d_3)=(2,0.7,0.4)$, we obtain $(r_1,r_2,r_3)=(0, -0.2, -1)$ and $(\Delta r_2, \Delta r_3)=(r_2,r_3)-(d_2,d_3)=(-0.5,-1.2)$. It is not hard to verify that all messages are successfully decoded with such power allocation.}
\end{example}

In what follows, we present the proofs of the achievability and the converse.

\section{Achievability}
For the achievability, to illustrate the main idea, we first take a 3-user interference channel with 2 states as an example (see Fig. \ref{fig:3-user-potential}(a)) in Section \ref{sec:3-user}. %

\subsection{A 3-user Example}\label{sec:3-user}
To simplify the notation, we denote by $W_k$ and $\Delta W_k$ the basic and opportunistic messages, respectively. Given a state $m \in \{1,2\}$, if $\pi_k(m)=1$, then Receiver $k$ only needs to decode the basic message $W_k$, and otherwise if $\pi_k(m)=2$, both basic and opportunistic messages are required to be decoded.
We use simple layered superposition coding at the transmitters and opportunistic TIN at the receivers (as introduced in Section \ref{sec:gdof-framework}) to derive the achievable GDoF region.

At Transmitter $k$, we send the superposition of the Gaussian coded basic message $W_k$ and the Gaussian coded opportunistic message $\Delta W_k$ with transmit power exponents $r_k^{[1]}$ and $r_k^{[2]}$ respectively.
\begin{align}
X_k = P^{r_k^{[1]}}X_b(W_k) + P^{r_k^{[2]}} X_o(\Delta W_k)
\end{align}
where 
\begin{align} \label{eq:power-difference}
r_k^{[m]} - r_k^{[m+1]}=d_k^{[m]}, \quad \forall k \in [3], \; \forall m \in [2]
\end{align}
{and $r_k^{[3]} = r_k^{[1]} - d_1^{[1]} - d_2^{[2]}$ is the lowest power level used by Transmitter $k$.}

By ignoring the $\max\{0,\cdot\}$ term in (\ref{eq:GDoF}), we focus on the achievable GDoF via polyhedral TIN (as done in \cite{Geng_TIN, TIN-X, Sun_Jafar_ParallelTIN, TIN-Compound, TIN-IMAC}), for which
\begin{align} \label{eq:gdof-constraint}
d_k^{[m]} = \min \Big\{r_k^{[m]} - r_k^{[m+1]}, \min_{m': \pi_k(m') \ge m} \big\{\alpha_{kk}^{[m']} + r_k^{[m]} - \max \{0, \max_{j: j \neq k} \alpha_{kj}^{[m']} + r_j^{[1]}\} \big\} \Big\} \geq 0.
\end{align}
Thus, the achievable GDoF region $\Pc$ by polyhedral TIN is the set of GDoF tuples $(d_k^{[m]},k\in[3], m \in [2])$ for which there exist $\{r_k^{[m]},k\in[3], m \in [2]\}$ such that the above constraints \eqref{eq:gdof-constraint} are satisfied for all $k \in [3]$ and $m \in [2]$.

Denote the GDoF region in (\ref{eq:GDoF-region}) by $\mathcal{P}^*$. To show that $\mathcal{P}^*$ is achievable by polyhedral TIN, we construct an achievable GDoF region $\Pc'$ such that $\Pc' \subseteq \Pc$ and $\Pc' = \mathcal{P}^*$.

\subsubsection{Constructing  $\Pc' \subseteq \Pc$}
By imposing \eqref{eq:power-difference} in \eqref{eq:gdof-constraint}, we have an achievable GDoF region $\Pc'$ that is a subset of $\Pc$.
Plugging \eqref{eq:power-difference} into \eqref{eq:gdof-constraint}, we have
\begin{align}
0 \le d_k^{[m]} &\le \min_{m': \pi_k(m') \ge m} \Big\{\alpha_{kk}^{[m']} + r_k^{[m]} - \max \big\{0, \max_{j: j \neq k} \alpha_{kj}^{[m']} + r_j^{[1]}\big\} \Big\}, \quad \forall m'\in\{1,2\}.
\end{align}
Specifying all possible values of $m, m'$, we have
\begin{subequations}
\begin{align}
d_k^{[1]} &\le \alpha_{kk}^{[1]} + r_k^{[1]} - \max \big\{0, \max_{j: j \neq k} \alpha_{kj}^{[1]} + r_j^{[1]}\big\} \label{eq:rr1}\\
d_k^{[1]} &\le \alpha_{kk}^{[2]} + r_k^{[1]} - \max \big\{0, \max_{j: j \neq k} \alpha_{kj}^{[2]} + r_j^{[1]}\big\} \label{eq:rr2}\\
\text{if } \pi_k(1)=2, \quad d_k^{[2]} &\le \alpha_{kk}^{[1]} + r_k^{[2]} - \max \big\{0, \max_{j: j \neq k} \alpha_{kj}^{[1]} + r_j^{[1]}\big\} \\
\overset{\eqref{eq:power-difference}}{\Longleftrightarrow} & \quad d_k^{[1]}+ d_k^{[2]} \le \alpha_{kk}^{[1]} + r_k^{[1]} - \max \big\{0, \max_{j: j \neq k} \alpha_{kj}^{[1]} + r_j^{[1]} \big\} \label{eq:3-user-2-state-relax1}\\
\text{if } \pi_k(2)=2, \quad d_k^{[2]} &\le \alpha_{kk}^{[2]} +  r_k^{[2]} - \max \big\{0, \max_{j: j \neq k} \alpha_{kj}^{[2]} + r_j^{[1]}\big\} \\
\overset{\eqref{eq:power-difference}}{\Longleftrightarrow} & \quad d_k^{[1]}+  d_k^{[2]} \le \alpha_{kk}^{[2]} + r_k^{[1]} - \max \big\{0, \max_{j: j \neq k} \alpha_{kj}^{[2]} + r_j^{[1]} \big\}. \label{eq:3-user-2-state-relax2}
\end{align}
\end{subequations}
Note that \eqref{eq:3-user-2-state-relax1} implies (\ref{eq:rr1}) and \eqref{eq:3-user-2-state-relax2} implies (\ref{eq:rr2}). %
Combining all above inequalities, we have the compact form
\begin{align}
\sum_{m=1}^{\pi_k(m')}d_k^{[m]} &\le \alpha_{kk}^{[m']} + r_k^{[1]} - \max\{0, \max_{j: j\neq k}\{\alpha_{kj}^{[m']} + r_j^{[1]}\}\}, \quad \forall m' \in \{1,2\}.
\end{align}
That is, we have a GDoF region $\Pc'$ of the set of GDoF tuples $(\{d_k^{[m]}\}_{k,m})$ with respect to $r_k^{[m]}$'s
\begin{subequations}
\begin{align} 
r_k &\le 0, \quad \forall k \in [3]\\
d_k^{[m]} &\ge 0, \quad \forall k \in [3], m \in [2]\\
d_k^{[m]} &= r_k^{[m]} - r_k^{[m+1]}, \quad \forall k \in [3], \; m\in [2] \label{eq:power-difference-m=1}\\
\sum_{m=1}^{\pi_k(1)}d_k^{[m]} &\le \alpha_{kk}^{[1]} + r_k^{[1]} - \max\{0, \max_{j: j\neq k}\{\alpha_{kj}^{[1]} + r_j^{[1]}\}\}, \quad k \in [3] \label{eq:m'=1m=1}\\
\sum_{m=1}^{\pi_k(2)}d_k^{[m]} &\le \alpha_{kk}^{[2]} + r_k^{[1]} - \max\{0, \max_{j: j\neq k}\{\alpha_{kj}^{[2]} + r_j^{[1]}\}\}, \quad k \in [3] \label{eq:m'=2m=2}
\end{align}
\end{subequations}
which is not larger than $\Pc$ because of the additional constraint \eqref{eq:power-difference-m=1}, i.e., $\Pc' \subseteq \Pc$. For notational simplicity, we hereafter set $r_k^{[1]}=r_k$ and $r_k^{[2]}=\Delta r_k$.

\subsubsection{Proof of $\Pc' = \mathcal{P}^*$}

To show $\Pc' = \mathcal{P}^*$, we %
eliminate the $r_k$'s in $\Pc'$, following the idea in \cite{Geng_TIN}.\footnote{Similar ideas have been applied and extended to other scenarios \cite{Geng_TIN, TIN-Compound, TIN-X, TIN-IMAC} to tackle different message settings and network topologies.}
Specifically, we construct a potential digraph where the lengths of the arcs are represented only by $d_k$'s and $\alpha_{ij}$'s. Then we verify the existence of a potential function by imposing that the lengths of directed circuits in the potential digraph are non-negative. 

\subsubsection*{\underline{Potential Digraph}}
For a given GDoF tuple $(d_k^{[m]},k\in[3], m \in [2]) \in \RR_+^6$ in $\Pc'$, according to \eqref{eq:m'=1m=1} and \eqref{eq:m'=2m=2}, it is feasible if and only if there exist $r_k$'s for all $k \in [3]$ satisfying, 
 \begin{subequations} \label{eq:ex-r-inequalities}
\begin{align}
r_k &\le 0\\
r_k &\ge \sum_{m=1}^{\pi_k(1)}d_k^{[m]}  - \alpha_{kk}^{[1]}\\
r_k &\ge \sum_{m=1}^{\pi_k(2)}d_k^{[m]}  - \alpha_{kk}^{[2]}\\
r_k -r_j &\ge \Big(\sum_{m=1}^{\pi_k(1)}d_k^{[m]} - \alpha_{kk}^{[1]}\Big) + \alpha_{kj}^{[1]},  \quad \forall j \neq k \\
r_k -r_j &\ge \Big(\sum_{m=1}^{\pi_k(2)}d_k^{[m]}  - \alpha_{kk}^{[2]}\Big) + \alpha_{kj}^{[2]},  \quad \forall j \neq k.
\end{align} 
\end{subequations}

In view of these inequalities, we construct a simple potential digraph $D'=(V',A')$, where
\begin{align}
V' &= \{u, v_1^{[1]}, v_1^{[2]}, v_2^{[1]}, v_2^{[2]}, v_3^{[1]},v_3^{[2]}\} \\
A' &=\{(u,v,l): u,v \in V', l \in \RR\}.
\end{align} 
The arc set consists of four parts $A' = A'_1 \cup A'_2 \cup A'_3 \cup A'_4$, where
\begin{subequations}
\begin{align}
A'_1 &= \{(u,v_k^{[m]},l(u,v_k^{[m]})): k \in [3], m \in [2]\} \\
A'_2 &= \{(v_k^{[m]},u,l(v_k^{[m]},u)): m \in \{1,2\}, k \in [3]\}\\
A'_3 &= \{(v_k^{[m]},v_j^{[m]},l(v_k^{[m]},v_j^{[m]})): m \in \{1,2\}, k, j \in [3], k \ne j\} \\
A'_4 &= \{(v_k^{[m_1]},v_k^{[m_2]},l(v_k^{[m_1]},v_k^{[m_2]})): m_1,m_2 \in \{1,2\}, m_1 \ne m_2, k \in [3]\}
\end{align}
\end{subequations}
with a length $l(a,b)$ assigned to every single arc $(a,b) \in \Ac'$ as follows:
\begin{subequations}
\begin{align}
l(u,v_k^{[m]}) &= 0, \quad \forall m=\{1,2\}\\
l(v_k^{[1]},u) &= \alpha_{kk}^{[1]} - \sum_{m=1}^{\pi_k(1)}d_k^{[m]}\\
l(v_k^{[2]},u) &= \alpha_{kk}^{[2]} - \sum_{m=1}^{\pi_k(2)}d_k^{[m]} \\
l(v_k^{[1]},v_j^{[1]}) &= \alpha_{kk}^{[1]} - \sum_{m=1}^{\pi_k(1)}d_k^{[m]} - \alpha_{kj}^{[1]}, \quad \forall k \ne j\\
l(v_k^{[2]},v_j^{[2]}) &= \alpha_{kk}^{[2]} - \sum_{m=1}^{\pi_k(2)}d_k^{[m]} - \alpha_{kj}^{[2]}, \quad \forall k \ne j\\
l(v_k^{[m_1]},v_k^{[m_2]}) &= 0, \quad \forall m_1,m_2 \in \{1,2\}, m_1 \ne m_2.
\end{align}
\end{subequations}

An illustrative example on the simple potential digraph when $\pi_k(1) = 1$ and $\pi_k(2) = 2$ is shown in Fig. \ref{fig:3-user-potential}.

\begin{figure}[h]
\center
\includegraphics[width=5 in]{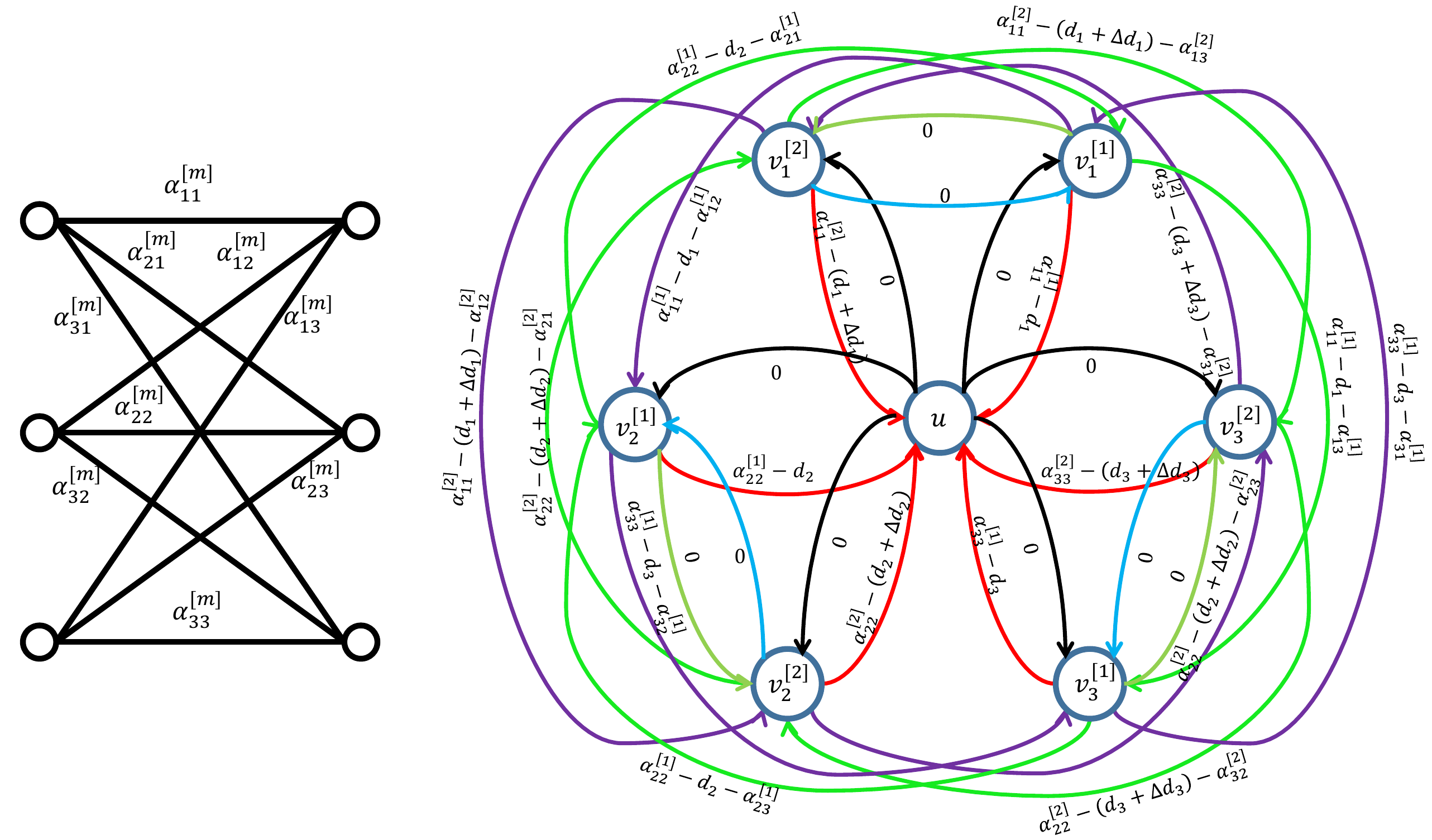}
\caption{\small A 3-user interference network (left) with 2 states $m \in \{1,2\}$ where channel strength exponents vary across two states, and a simple potential digraph (right) corresponding to a special case when $\pi_k(1)=1$ and $\pi_k(2)=2$ for all $k=\{1,2,3\}$. Receiver $k$ decodes the basic message $W_k$ yielding GDoF $d_k$ at State $S_1$, and decodes at State $S_2$ both basic and opportunistic messages $W_k$ and $\Delta W_k$ yielding GDoF $d_k$ and $\Delta d_k$ respectively.}
\label{fig:3-user-potential}
\end{figure}

By the potential digraph, we connect the existence of $r_k$'s to the existence of a valid potential function for this digraph.

\begin{lemma} \label{lemma:potential-func}
The GDoF tuple $(\{d_k^{[k,m]}\}_{k,m}) \in \RR_+^6$ is feasible if and only if there exists a valid potential function for the simple digraph $D'=(V',A')$.
\end{lemma}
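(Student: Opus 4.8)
The plan is to recognize this as a standard linear-programming duality / Farkas-type statement adapted to the potential-function language of \cite{Geng_TIN}. The system \eqref{eq:ex-r-inequalities} is precisely a collection of difference constraints on the variables $r_1,r_2,r_3$ (after introducing the anchor variable, which will be the vertex $u$ with the convention $r_u=0$): each inequality of the form $r_k\le 0$, $r_k\ge c$, or $r_k-r_j\ge c$ can be rewritten as $r_b-r_a\le l(a,b)$ for an appropriate arc $(a,b)$ and length $l(a,b)$ as tabulated in the definition of $D'$. So the first step is to make this translation explicit: read off from \eqref{eq:ex-r-inequalities} that a feasible assignment of the $r_k$'s exists if and only if the potentials $p(v)$ (with $p(u)=0$, $p(v_k^{[1]})=p(v_k^{[2]})=r_k$) satisfy $p(v)-p(w)\le l(w,v)$ for every arc $(w,v)\in A'$. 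This is exactly the definition of a valid potential function for $D'$.

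The second step is the classical fact — which I would invoke rather than reprove, but sketch for completeness — that a difference-constraint system $p(v)-p(w)\le l(w,v)$ over the arcs of a digraph admits a solution if and only if the digraph contains no directed cycle of negative total length; equivalently, if and only if it admits a valid potential function (one can take $p(v)=$ shortest-path distance from $u$ once non-negativity of all cycles guarantees finiteness of those distances). For the ``only if'' direction: if such $p$ exists, summing $p(v)-p(w)\le l(w,v)$ around any directed cycle telescopes to $0\le\sum l$, so all cycles are non-negative. For the ``if'' direction: with $D'$ strongly connected through $u$ (every $v_k^{[m]}$ has an arc to and from $u$), define $p(v)$ as the minimum total length over all directed walks from $u$ to $v$; non-negativity of cycles makes this well-defined and finite, and the triangle inequality for shortest paths gives $p(v)\le p(w)+l(w,v)$, i.e., $p(v)-p(w)\le l(w,v)$, as required. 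One also needs $p(u)=0$, which holds because the empty walk has length $0$ and no negative cycle through $u$ can do better.

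The third step is just to close the loop: by the construction in the preceding paragraph of the excerpt, the GDoF tuple lies in $\Pc'$ (equivalently, is ``feasible'' in the sense used here) if and only if there exist $r_k$'s satisfying \eqref{eq:ex-r-inequalities}, and by steps one and two this happens if and only if $D'$ admits a valid potential function. Chaining the equivalences yields the lemma. I would note in passing that no channel-strength assumption is needed for this lemma — it is purely a restatement via LP duality, so \eqref{eq:tin-optimal} is not invoked here (it will enter later, when one checks that the negative-cycle conditions on $D'$ reduce exactly to the inequalities \eqref{eq:GDoF-region}).

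The main obstacle is essentially bookkeeping rather than conceptual: one must verify that the arc set $A'=A'_1\cup A'_2\cup A'_3\cup A'_4$ together with the assigned lengths captures \emph{all and only} the constraints in \eqref{eq:ex-r-inequalities}, including the ``trivial'' arcs $A'_4$ of length $0$ between $v_k^{[1]}$ and $v_k^{[2]}$ that encode the identification $p(v_k^{[1]})=p(v_k^{[2]})=r_k$, and the arcs $A'_1$ of length $0$ from $u$ encoding $r_k\le 0$. A minor subtlety worth a sentence is that the $\max\{0,\cdot\}$ inside the GDoF expression has already been dropped (polyhedral TIN), so the ``$-\max\{0,\max_j\cdots\}$'' term splits into the two separate families of constraints (the ``$0$'' branch giving $r_k\ge\sum d_k^{[m]}-\alpha_{kk}^{[m']}$ and the other branch giving the $r_k-r_j$ constraints), both of which must appear as arcs — and they do, as $A'_2$ and $A'_3$ respectively.
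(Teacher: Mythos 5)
Your proposal is correct and follows essentially the same route as the paper: normalize $p(u)=0$ and set $p(v_k^{[1]})=p(v_k^{[2]})=r_k$ (the identification being forced by the two zero-length arcs of $A'_4$ in opposite directions), then observe that the arc-length inequalities of $D'$ are exactly the system \eqref{eq:ex-r-inequalities}. Your second step (the potential theorem relating existence of a potential function to non-negativity of directed circuits) is not needed for this lemma itself---the paper invokes it only afterwards, via \cite[Th.~8.2]{Book-Comb-Opt}, to eliminate the $r_k$'s---but including it does no harm.
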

\begin{proof}
The proof is similar to that in \cite{Geng_TIN}. Given a simple digraph $D=(V,A)$, a function $p: V \mapsto \RR$ is called potential if for every arc $(a,b) \in A$ with length $l(a,b)$, it satisfies $l(a,b) \ge p(b)-p(a)$.

In the simple digraph $D'=(V',A')$, if there exists a valid potential function $p(\cdot)$, then letting 
\begin{align}
p(u)=0, \quad p(v_k^{[1]}) = p(v_k^{[2]}) = r_k, \forall k
\end{align}
the potential function values must satisfy the following inequalities
\begin{subequations}
\begin{align}
l(v_k^{[1]}, v_j^{[1]}) \ge r_j - r_k &\Longleftrightarrow r_k - r_j \ge \sum_{m=1}^{\pi_k(1)}d_k^{[m]}-\alpha_{kk}^{[1]}+\alpha_{kj}^{[1]} \\
l(v_k^{[2]}, v_j^{[2]}) \ge r_j - r_k &\Longleftrightarrow r_k - r_j \ge \sum_{m=1}^{\pi_k(2)}d_k^{[m]}-\alpha_{kk}^{[2]}+\alpha_{kj}^{[2]} \\
l(u, v_k^{[1]}) \ge r_k &\Longleftrightarrow r_k \le 0 \\
l(u, v_k^{[2]}) \ge r_k &\Longleftrightarrow r_k \le 0 \\
l(v_k^{[1]},u) \ge  -r_k &\Longleftrightarrow r_k \ge \sum_{m=1}^{\pi_k(1)}d_k^{[m]} - \alpha_{kk}^{[1]}\\
l(v_k^{[2]},u) \ge  -r_k &\Longleftrightarrow r_k \ge \sum_{m=1}^{\pi_k(2)}d_k^{[m]}k - \alpha_{kk}^{[2]}\\
l(v_k^{[1]},v_k^{[2]}) \ge r_k-r_k &\Longleftrightarrow 0 \ge 0 \\
l(v_k^{[2]},v_k^{[1]}) \ge r_k-r_k &\Longleftrightarrow 0 \ge 0.
\end{align}
\end{subequations}
It can be readily verified that the nontrivial inequalities above exactly match those in \eqref{eq:ex-r-inequalities}. Both``if'' and ``only if'' parts hold together.
\end{proof}

The above simple potential digraph consists of $MK+1=7$ vertices for a 2-state 3-user interference channel, which becomes involved for large $M, K$. Next, we simplify it to a labeled multi-digraph.
\subsubsection*{\underline{Labeled Multi-digraph Representation} }
We construct a labeled multi-digraph $D=(V,A)$ to represent the simple digraph towards simplifying it. Fig. \ref{fig:multi-digraph} gives an illustrative example on the simplification of the potential digraph in Fig. \ref{fig:3-user-potential}.
\begin{figure}[h]
\center
\includegraphics[width=3 in]{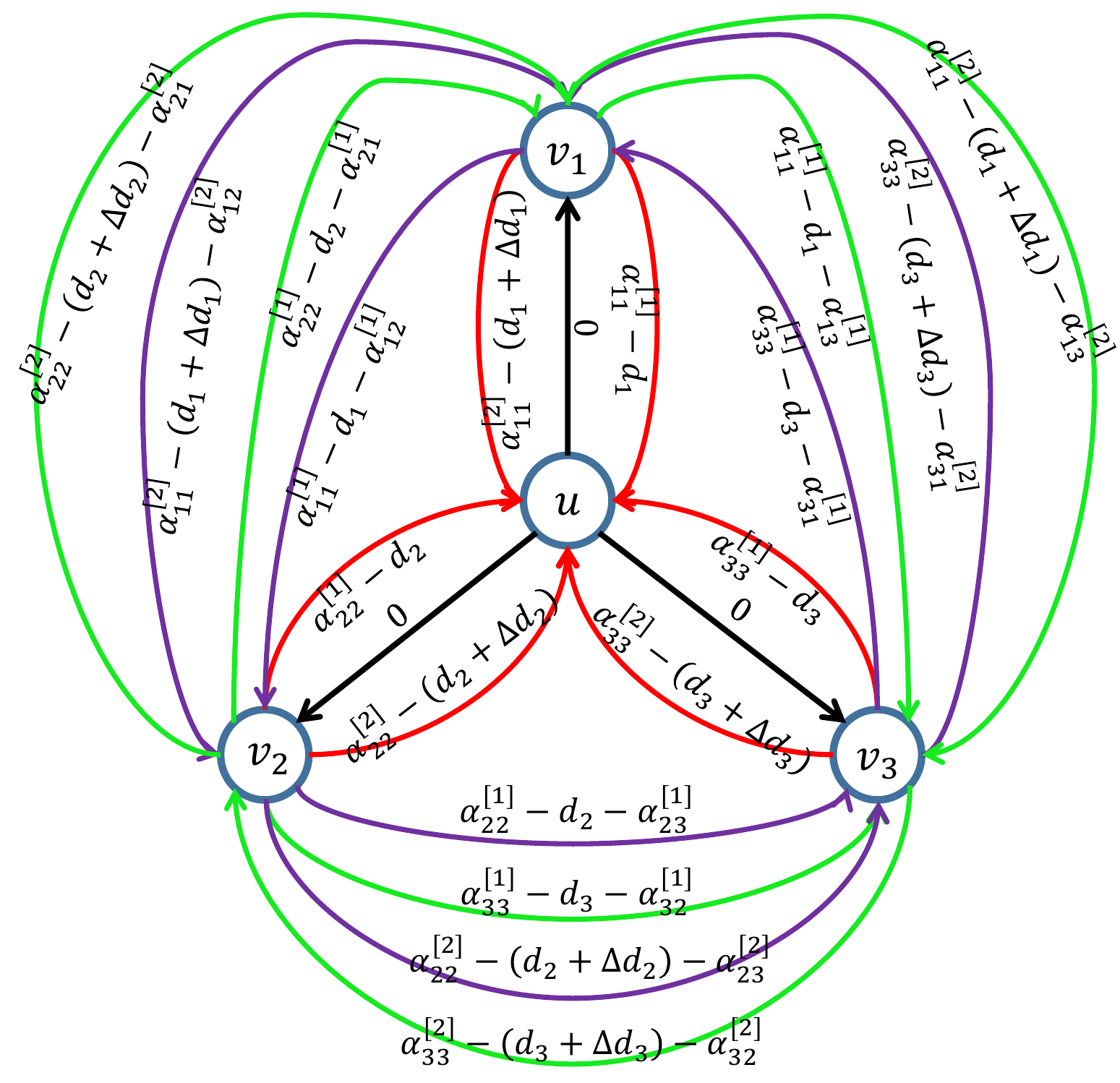}
\caption{\small The labeled multi-digraph for a 3-user interference network with 2 states simplifies the simple digraph in Fig. \ref{fig:3-user-potential}.}
\label{fig:multi-digraph}
\end{figure}

Given the simple potential graph $D'=(V',A')$, we merge the vertices $v_k^{[1]}$ and $v_k^{[2]}$ into a single one $v_k$, and the arcs in $A'$ are labeled as follows: (1) the arcs $\{(u,v_k^{[m]},l(u,v_k^{[m]})): m=\{1,2\}\}$ are merged as a single arc $(u,v_k,l(u,v_k))$; (2) the arcs between $\{v_k^{[m]}: m=\{1,2\}\}$ are removed; (3) the arcs from $v_k^{[m]}$ to $u$ are relabeled as $(v_k,u,m,l^{[m]}(v_k,u))$; (4) the arcs $(v_k^{[m]},v_j^{[m]},l(v_k^{[m]},v_j^{[m]}))$ are relabeled as $(v_k,v_j,m,l^{[m]}(v_k,v_j))$.

In particular, the labeled multi-digraph $D=(V,A)$ is such that
\begin{align}
V &= \{u, v_1, v_2, v_3\} \\
A &=\{(u,v,m,l): u,v \in V, m \in \{1,2\}, l \in \RR\}
\end{align}
where $m$ specifies the label of an arc and $l=l^{[m]}(u,v)$ is the length between $u,v \in V$ with label $m$. The arc set consists of three parts $A = A_1 \cup A_2 \cup A_3$, where
\begin{subequations}
\begin{align}
A_1 &= \{(u,v_k,,l(u,v_k)): k \in [3]\} \\
A_2 &= \{(v_k,u,m,l^{[m]}(v_k,u)): m \in \{1,2\}, k \in [3]\}\\
A_3 &= \{(v_k,v_j,m,l^{[m]}(v_k,v_j)): m \in \{1,2\}, k, j \in [3], k \ne j\} 
\end{align}
\end{subequations}
with a length $l^{[m]}(a,b)$ assigned to every single arc $(a,b) \in \Ac$ as follows:
\begin{subequations}
\begin{align}
l(u,v_k) &= 0\\
l^{[1]}(v_k,u) &= \alpha_{kk}^{[1]} - \sum_{m=1}^{\pi_k(1)}d_k^{[m]}\\
l^{[2]}(v_k,u) &= \alpha_{kk}^{[2]} - \sum_{m=1}^{\pi_k(2)}d_k^{[m]} \\
l^{[1]}(v_k,v_j) &= \alpha_{kk}^{[1]} - \sum_{m=1}^{\pi_k(1)}d_k^{[m]}- \alpha_{kj}^{[1]} \\
l^{[2]}(v_k,v_j) &= \alpha_{kk}^{[2]} - \sum_{m=1}^{\pi_k(2)}d_k^{[m]} - \alpha_{kj}^{[2]}.
\end{align}
\end{subequations}

This multi-digraph representation simplifies the description of the potential graph, due to the following lemma.
\begin{lemma} \label{lemma:digraph-equiv}
The labeled multi-digraph $D$ inherits two properties from the single digraph $D'$: (1) the potential function in $D'$ is valid in $D$; (2) for every circuit in $D'$ there is a corresponding circuit in $D$ with the same length.
\end{lemma}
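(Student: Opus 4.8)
The plan is to prove both claims by bookkeeping on the vertex-merging map. Let $\phi\colon V'\to V$ be defined by $\phi(u)=u$ and $\phi(v_k^{[m]})=v_k$. By the construction of $D$ from $D'$, every arc of $D$ is the $\phi$-image of one arc of $D'$ (or, for the arcs in $A_1$, of the two equal-length arcs $(u,v_k^{[1]},0)$ and $(u,v_k^{[2]},0)$), keeping the length unchanged and, for the arcs in $A_2$ and $A_3$, keeping the state index $m$ as the label; conversely, the only arcs of $D'$ whose $\phi$-image fails to be an arc of $D$ are the intra-user arcs in $A'_4$, every one of which has length $0$. This arc dictionary is the one fact that both parts rest on.

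For part (1), note first that the two arcs $(v_k^{[1]},v_k^{[2]},0)$ and $(v_k^{[2]},v_k^{[1]},0)$ force any valid potential $p$ of $D'$ to satisfy $p(v_k^{[1]})=p(v_k^{[2]})$; call this common value $r_k$, put $\bar p(u)=p(u)$ and $\bar p(v_k)=r_k$, so that $\bar p\bigl(\phi(x)\bigr)=p(x)$ for every $x\in V'$. For any arc $e$ of $D$ with tail $a$, head $b$, and length $l$, pick a pre-image arc of $D'$ with tail $a'$, head $b'$, and the same length $l$, where $\phi(a')=a$ and $\phi(b')=b$; then $\bar p(b)-\bar p(a)=p(b')-p(a')\le l$, which is exactly the potential inequality for $e$. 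Hence $\bar p$ is a valid potential of $D$. (The reverse implication, extending a valid potential of $D$ to $D'$ by setting $p(v_k^{[m]}):=\bar p(v_k)$, is equally immediate, and it is this two-sided equivalence, together with Lemma~\ref{lemma:potential-func}, that the feasibility argument ultimately invokes.)

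For part (2), let $C'$ be a circuit of $D'$, written as a closed directed walk $a_0\to a_1\to\cdots\to a_n=a_0$. Traverse its arcs in order: if the arc from $a_{t-1}$ to $a_t$ lies in $A'_4$, then $\phi(a_{t-1})=\phi(a_t)$ and this arc has length $0$, so delete that step; otherwise the pair $\bigl(\phi(a_{t-1}),\phi(a_t)\bigr)$ is an arc of $D$ of the same length, carrying the same label when the original arc lies in $A'_2\cup A'_3$. Deleting a step that begins and ends at the same vertex preserves closedness, so the surviving arcs assemble into a closed directed walk $C$ in $D$ with $\mathrm{length}(C)=\mathrm{length}(C')$; in the degenerate case where all arcs of $C'$ lie in $A'_4$, $C'$ has length $0$ and $C$ is the trivial circuit. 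This is the claimed length-preserving correspondence, and combined with the routine reverse correspondence it shows that $D'$ and $D$ realize the same set of circuit lengths, so every circuit of $D'$ has nonnegative length iff every circuit of $D$ does.

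The only genuine subtlety is the treatment of the $A'_4$ arcs in part (2): because $D$ has no loops they cannot be transported, and one must justify simply excising them from a circuit; this is harmless precisely because they have length $0$ and $\phi$ collapses their two endpoints to a single vertex, so removing them alters neither the total length nor the closedness of the walk. Everything else is the mechanical verification that the arc dictionary of the first paragraph preserves lengths, labels, and therefore potential inequalities.
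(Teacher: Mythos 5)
Your proof is correct, and it is a more complete argument than the one in the paper, though it rests on the same underlying idea of collapsing each pair $v_k^{[1]}, v_k^{[2]}$ into a single vertex. The paper verifies part (1) by simply assigning $p(v_k^{[1]})=p(v_k^{[2]})=r_k$ and re-listing the resulting potential inequalities, and it establishes part (2) only by tabulating a family of representative circuits, explicitly conceding that ``the list therein is not exhaustive.'' You close both gaps: for part (1) you observe that the two opposite zero-length arcs in $A'_4$ \emph{force} any valid potential of $D'$ to take equal values on $v_k^{[1]}$ and $v_k^{[2]}$, so the descent to $D$ via the quotient map $\phi$ is automatic rather than assumed, and you also record the (needed) reverse extension from $D$ back to $D'$; for part (2) your excision argument --- delete the length-zero $A'_4$ steps, whose endpoints $\phi$ identifies, and transport every other arc with its length and label intact --- handles an arbitrary circuit of $D'$ uniformly instead of case by case. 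The only caveat is cosmetic: the image you produce is in general a closed directed walk in $D$ rather than a simple circuit, but a closed walk decomposes into circuits of the same total length, so nonnegativity of all circuit lengths in $D$ still yields nonnegativity of the length of every circuit of $D'$, which is exactly what the application of the potential theorem requires (and the paper's own table is loose on the same point).
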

\begin{proof}
In the labeled multi-digraph $D=(V,A)$, using the same potential function $p(\cdot)$ as in $D'$, we assign the following values
\begin{align}
p(u) = 0, \quad p(v_k)=r_k, \forall k.
\end{align}
As the potential function values satisfy the same set of inequalities, 
\begin{subequations}
\begin{align}
l^{[1]}(v_k, v_j) \ge r_j - r_k &\Longleftrightarrow r_k - r_j \ge \sum_{m=1}^{\pi_k(1)}d_k^{[m]}-\alpha_{kk}^{[1]}+\alpha_{kj}^{[1]} \\
l^{[2]}(v_k, v_j) \ge r_j - r_k &\Longleftrightarrow r_k - r_j \ge \sum_{m=1}^{\pi_k(2)}d_k^{[m]} - \alpha_{kk}^{[2]}+\alpha_{kj}^{[2]} \\
l(u, v_k) \ge r_k &\Longleftrightarrow r_k \le 0 \\
l^{[1]}(v_k,u) \ge  -r_k &\Longleftrightarrow r_k \ge \sum_{m=1}^{\pi_k(1)}d_k^{[m]} - \alpha_{kk}^{[1]}\\
l^{[2]}(v_k,u) \ge  -r_k &\Longleftrightarrow r_k \ge \sum_{m=1}^{\pi_k(2)}d_k^{[m]} - \alpha_{kk}^{[2]},
\end{align}
\end{subequations}
we conclude that the potential function in $D'$ works in $D$.

For the correspondence of directed circuits in $D'$ and $D$, we illustrate some of the typical ones in Table \ref{table:dicircuits}. Note that the list therein is not exhaustive.
\begin{table}[h]
  \centering
  \caption{The correspondence of directed circuits between $D$ and $D'$.}
  \begin{tabular}{c | c | c }
    \hline \hline
      $D'$ & $D$ & Length  \\ \hline \hline
     $(u,v_{1}^{[1]},u)$ or $(u,v_{1}^{[1]},v_{1}^{[2]},u)$ & $(u\rightarrow v_1 \xrightarrow{[1]} u)$ & $l^{[1]}(v_1,u)$   \\ \hline
     $(u,v_{1}^{[1]},v_{2}^{[1]},u)$ & $(u\rightarrow v_1 \xrightarrow{[1]} v_2 \xrightarrow{[1]} u)$ &   $l^{[1]}(v_1,v_2)+l^{[1]}(v_2,u)$ \\ \hline
      $(u,v_{1}^{[1]},v_{1}^{[2]},v_{2}^{[2]},u)$ & $(u\rightarrow v_1 \xrightarrow{[2]} v_2 \xrightarrow{[2]} u)$ &   $l^{[2]}(v_1,v_2)+l^{[2]}(v_2,u)$ \\ \hline
      $(u,v_{1}^{[1]},v_{2}^{[1]},v_{2}^{[2]},u)$ & $(u\rightarrow v_1 \xrightarrow{[1]} v_2 \xrightarrow{[2]} u)$ &   $l^{[1]}(v_1,v_2)+l^{[2]}(v_2,u)$ \\ \hline
      $(v_{1}^{[1]},v_{2}^{[1]},v_{1}^{[1]})$ & $(v_1 \xrightarrow{[1]} v_2 \xrightarrow{[1]} v_1)$ &   $l^{[1]}(v_1,v_2)+l^{[1]}(v_2,v_1)$ \\ \hline
      $(v_{1}^{[1]},v_{2}^{[1]},v_{2}^{[2]},v_{2}^{[1]},v_{1}^{[1]})$ & $(v_1 \xrightarrow{[1]} v_2 \xrightarrow{[2]} v_1)$ &   $l^{[1]}(v_1,v_2)+l^{[2]}(v_2,v_1)$ \\ \hline
      $(v_{1}^{[1]},v_{2}^{[1]},v_3^{[1]},v_{1}^{[1]})$ & $(v_1 \xrightarrow{[1]} v_2 \xrightarrow{[1]} v_3 \xrightarrow{[1]} v_1)$ &   $l^{[1]}(v_1,v_2)+l^{[1]}(v_2,v_3)+l^{[1]}(v_3,v_1)$ \\ \hline
      $(v_{1}^{[1]},v_{2}^{[1]}, v_{2}^{[2]},v_{3}^{[2]},v_1^{[2]},v_{1}^{[1]})$ & $(v_1 \xrightarrow{[1]} v_2 \xrightarrow{[2]} v_3 \xrightarrow{[2]} v_1)$ &   $l^{[1]}(v_1,v_2)+l^{[2]}(v_2,v_3)+l^{[2]}(v_3,v_1)$ \\ \hline
    $(u,v_{1}^{[1]},v_{2}^{[1]}, v_{2}^{[2]},v_{3}^{[2]},u)$ & $(u \rightarrow v_1 \xrightarrow{[1]} v_2 \xrightarrow{[2]} v_3 \xrightarrow{[2]} u)$ &   $l^{[1]}(v_1,v_2)+l^{[2]}(v_2,v_3)+l^{[2]}(v_3,u)$ \\ \hline \hline
  \end{tabular}
  \label{table:dicircuits}
\end{table}

Thus, we conclude that we can count the directed circuits in the labeled multi-graph $D$ to verify the existence of the potential function.
\end{proof}

\subsubsection*{\underline{GDoF Region Identification}} 
Now, operating on the potential digraphs, we are able to eliminate $r_k$'s in $\Pc'$ such that only $\{d_k^{[m]}\}_{k,m}$ remain.

According to Lemma \ref{lemma:potential-func}, a GDoF tuple $(\{d_k^{[m]}\}_{k,m})$ in $\Pc'$ is feasible if and only if there exists a potential function for the simple directed graph $D'$.
According to the potential theorem \cite[Th. 8.2]{Book-Comb-Opt}, the potential function exists if and only if each directed circuit in $D'$ has a non-negative sum-length. By Lemma \ref{lemma:digraph-equiv}, it suffices to impose that the directed circuits on the labeled multi-digraph $D$ are non-negative. In this way, we are able to identify $\Pc'$ without involving $r_k$'s.
Next, we divide the directed circuits into the following classes.
\begin{itemize}
\item Class I:  Directed circuits in the form of $(u \xrightarrow{} v_k \xrightarrow{[m]} u)$ for all $m\in [2]$ and $k \in [3]$.
\begin{subequations}
\begin{align}
\alpha_{kk}^{[1]} - \sum_{m=1}^{\pi_k(1)}d_k^{[m]} \ge 0 &\Leftrightarrow \sum_{m=1}^{\pi_k(1)}d_k^{[m]} \le \alpha_{kk}^{[1]}\\
\alpha_{kk}^{[2]} - \sum_{m=1}^{\pi_k(2)}d_k^{[m]} \ge 0 &\Leftrightarrow \sum_{m=1}^{\pi_k(2)}d_k^{[m]} \le \alpha_{kk}^{[2]}. \label{eq:individual}
\end{align}
\end{subequations}
\item Class II: Directed circuits in the form of $(v_k \xrightarrow{[m_1]} v_j \xrightarrow{[m_2]} v_k)$ for all $k \ne j \in [3]$ and $(m_1,m_2)\in\{(1,1),(1,2),(2,1),(2,2)\}$. For instance, when $(m_1,m_2) =(1,2)$, we have
\begin{subequations} \label{eq:3-user-potential-2-user}
\begin{align} 
&\alpha_{kk}^{[1]} - \sum_{m=1}^{\pi_k(1)}d_k^{[m]} - \alpha_{kj}^{[1]} + \alpha_{jj}^{[2]} - \sum_{m=1}^{\pi_j(2)}d_j^{[m]} - \alpha_{jk}^{[2]} \ge 0\\
& \qquad \Leftrightarrow \sum_{m=1}^{\pi_k(1)}d_k^{[m]} + \sum_{m=1}^{\pi_j(2)}d_j^{[m]} \le \alpha_{kk}^{[1]}+ \alpha_{jj}^{[2]}- \alpha_{kj}^{[1]} - \alpha_{jk}^{[2]}.
\end{align}
\end{subequations}
\item Class III: Directed circuits in the form of $(u \xrightarrow{} v_k \xrightarrow{[m_1]} v_j \xrightarrow{[m_2]} u)$ for all $k \ne j \in [3]$ and $(m_1,m_2)\in\{(1,1),(1,2),(2,1),(2,2)\}$. For instance, when $(m_1,m_2)=(1,2)$, we have
\begin{subequations}
\begin{align}
&\alpha_{kk}^{[1]} - \sum_{m=1}^{\pi_k(1)}d_k^{[m]} + \alpha_{jj}^{[2]} - \sum_{m=1}^{\pi_j(2)}d_j^{[m]}) - \alpha_{jk}^{[2]} \ge 0\\
& \qquad \Leftrightarrow \sum_{m=1}^{\pi_k(1)}d_k^{[m]} + \sum_{m=1}^{\pi_j(2)}d_j^{[m]} \le \alpha_{kk}^{[1]}+ \alpha_{jj}^{[2]} - \alpha_{jk}^{[2]},
\end{align}
\end{subequations}
which are implied by \eqref{eq:3-user-potential-2-user}, because $\alpha_{jk}^{[m]} \ge 0$ for all $j,k \in [3]$ and $m \in [2]$.
\item Class IV: Directed circuits in the form of $(v_k \xrightarrow{[m_1]} v_j \xrightarrow{[m_2]} v_i \xrightarrow{[m_3]} v_k)$ for all $(m_1,m_2,m_3) \in [2]^3$ and for either $(k,j,i)=(1,2,3)$ or $(k,j,i)=(1,3,2)$. For instance, when $(m_1,m_2,m_3)=(1,2,2)$, we have
\begin{subequations} \label{eq:3-user-potential-3-user}
\begin{align}
&\alpha_{kk}^{[1]} - \sum_{m=1}^{\pi_k(1)}d_k^{[m]} - \alpha_{kj}^{[1]} + \alpha_{jj}^{[2]} - \sum_{m=1}^{\pi_j(2)}d_j^{[m]} - \alpha_{ji}^{[2]} + \alpha_{ii}^{[2]} - \sum_{m=1}^{\pi_i(2)}d_i^{[m]} - \alpha_{ik}^{[2]} \ge 0 \\
& \qquad \Leftrightarrow \sum_{m=1}^{\pi_k(1)}d_k^{[m]} + \sum_{m=1}^{\pi_j(2)}d_j^{[m]} + \sum_{m=1}^{\pi_i(2)}d_i^{[m]} \le \alpha_{kk}^{[1]}+ \alpha_{jj}^{[2]} + \alpha_{ii}^{[2]}- \alpha_{kj}^{[1]} - \alpha_{ji}^{[2]} - \alpha_{ik}^{[2]}.
\end{align}
\end{subequations}
\item Class V: Directed circuits in the form of $(u \xrightarrow{[m_1]} v_k \xrightarrow{[m_2]} v_j \xrightarrow{[m_3]} v_i,u)$ for all $(m_1,m_2,m_3) \in [2]^3$ and for either $(k,j,i)=(1,2,3)$ or $(k,j,i)=(1,3,2)$. Similarly, the resulting sum GDoF inequalities are implied by those in \eqref{eq:3-user-potential-3-user}.
\end{itemize}

To sum up, after removing the redundant inequalities, we are left with \eqref{eq:individual} for all $k \in [3]$, \eqref{eq:3-user-potential-2-user} for all $(k,j) \in \{(1,2), (2,3), (1,3)\}$, and \eqref{eq:3-user-potential-3-user} for all $(k,j,i)\in\{(1,2,3), (1,3,2)\}$. A concise expression is as follows.
\begin{subequations} \label{eq:ex-3-user-2-state}
\begin{align}
  \sum_{m=1}^{\pi_k(m')}d_k^{[m]} &\le \alpha_{kk}^{[m']}, \quad \forall k \in [3], \; \forall m' \in [2]  \label{eq:3-user-individual}\\
  \sum_{m=1}^{\pi_k(m_1)}d_k^{[m]} + \sum_{m=1}^{\pi_j(m_2)}d_j^{[m]} &\le (\alpha_{kk}^{[m_1]}- \alpha_{kj}^{[m_1]}) + (\alpha_{jj}^{[m_2]} - \alpha_{jk}^{[m_2]}), \nonumber \\
  &\qquad\qquad\qquad \forall (k,j) \in \Pi_2, \; \forall (m_1,m_2) \in [2]^2\\
  \sum_{m=1}^{\pi_k(m_1)}d_k^{[m]} + \sum_{m=1}^{\pi_j(m_2)}d_j^{[m]} + \sum_{m=1}^{\pi_i(m_3)}d_i^{[m]} &\le (\alpha_{kk}^{[m_1]}- \alpha_{kj}^{[m_1]})+ (\alpha_{jj}^{[m_2]} - \alpha_{ji}^{[m_2]})+ (\alpha_{ii}^{[m_3]}  - \alpha_{ik}^{[m_3]}),\nonumber \\ &\qquad\qquad\qquad \forall (k,j,i) \in \Pi_3, \; \forall (m_1,m_2,m_3) \in [2]^3.
 \end{align}
 \end{subequations}
A more compact form of the last two sets of inequalities is
 \begin{align}
 \sum_{k=1}^{k'} \sum_{m=1}^{\pi_{i_k}(m_{i_k})}d_{i_k}^{[m]} &\le  \sum_{k=1}^{k'} (\alpha_{i_k i_k}^{[m_{i_k}]} - \alpha_{i_k i_{k+1}}^{[m_{i_k}]}), \label{3-user-sum}\\
&  \forall  (i_1, i_2, \dots, i_{k'}) \in \Pi_{k'}, \; \forall (m_{i_1}, m_{i_2}, \dots, m_{i_{k'}}) \in [2]^{k'}, \; \forall k' \in  \{2,3\}.
  \end{align}
It can be verified that, when $k'=2$, we have $(i_1,i_2) \in \Pi_2=\{(1,2), (2,3), (1,3)\}$ and $(m_{i_1},m_{i_2}) \in [2]^2= \{(1,1), (1,2), (2,1), (2,2)\}$, and thus the inequalities in \eqref{3-user-sum} correspond to those in Class II; when $k'=3$, we have $(i_1,i_2,i_3) \in \Pi_3=\{(1,2,3), (1,3,2)\}$ and $(m_{i_1},m_{i_2}, m_{i_3}) \in [2]^3=\{(1,1,1), (1,1,2), (1,2,1), (1,2,2), (2,1,1), (2,1,2), (2,2,1), (2,2,2)\}$, and thus the inequalities in \eqref{3-user-sum} correspond to those in Class IV.

Finally, note that the inequalities in (\ref{eq:ex-3-user-2-state}) match exactly those in $\mathcal{P}^*$. Therefore $\Pc' = \mathcal{P}^*$ and the achievability proof is complete.

{\subsection{The General Proof}\label{sec:main}
By simple layered superposition coding and opportunistic TIN decoding, the achievable GDoF value of the message $W_k^{[m]}$ via polyhedral TIN, with the $\max\{0,\cdot\}$ term ignored in \eqref{eq:GDoF}, is given by
\begin{align} \label{eq:general-poly-tin-gdof}
d_k^{[m]} = \min \Big\{r_k^{[m]} - r_k^{[m+1]}, \min_{m': \; \pi_k(m') \geq m} \big\{\alpha_{kk}^{[m']} + r_k^{[m]} - \max \{0, &\max_{i: i \neq k} (\alpha_{ki}^{[m']} + r_i^{[1]})\} \big \} \Big\}, \nonumber \\ 
&  \forall k \in [K], \; \forall m \in [M].
\end{align}
Thus, the polyhedral TIN achievable GDoF region $\Pc$ will be the set of GDoF tuples $(\{d_{k}^{[m]}\}_{k,m}) \in \RR_+^{MK}$, for which there exist $\{r_k^{[m]}\}_{k,m} \in \RR_-^{MK}$, such that all equations in \eqref{eq:general-poly-tin-gdof} are satisfied. In general, the polyhedral TIN region can only shrink the achievable GDoF region of TIN \cite{Geng_TIN}. We aim to show that, when the TIN optimality condition \eqref{eq:tin-optimal} is satisfied, polyhedral TIN incurs no loss, and achieves the optimal GDoF region $\Pc^*$ in \eqref{eq:GDoF-region}.

{In what follows, we first impose a constraint on \eqref{eq:general-poly-tin-gdof} to construct an achievable GDoF region $\Pc' \subseteq \Pc$, and then by identifying $\Pc'$ and showing that it is the same as $\Pc^*$, we complete the achievability proof. %

\subsubsection{Constructing $\Pc' \subseteq \Pc$}
By imposing the following constraint
\begin{align} \label{eq:additional-cont-general}
d_k^{[m]} &= r_k^{[m]} - r_k^{[m+1]}, \quad \forall k \in [K], \; m \in [M],
\end{align}
\eqref{eq:general-poly-tin-gdof} reduces to
\begin{align} 
d_k^{[m]} &\le \min_{m': \; \pi_k(m') \geq m} \big\{\alpha_{kk}^{[m']} + r_k^{[m]} - \max \{0, \max_{i: i \neq k} (\alpha_{ki}^{[m']} + r_i^{[1]})\} \big \}, \quad \forall k \in [K], \; \forall m \in [M],
\end{align}
which further expands to
\begin{align} 
d_k^{[m]} &\le  \alpha_{kk}^{[m']} + r_k^{[m]} - \max \{0, \max_{i: i \neq k} (\alpha_{ki}^{[m']} + r_i^{[1]})\}, \\
&= \alpha_{kk}^{[m']} +  r_k^{[1]} - \sum_{m''=1}^{m-1} d_k^{[m'']} - \max \{0, \max_{i: i \neq k} (\alpha_{ki}^{[m']} + r_i^{[1]})\}, \nonumber \\
&  \qquad \qquad \qquad \qquad \forall m' \in [M], \; \st \; \pi_k(m') \geq m, \; \forall k \in [K], \; \forall m \in [M],
 \label{eq:general-poly-tin-gdof-relaxed}
\end{align}
due to
\begin{align}
r_k^{[m]} = r_k^{[1]} - \sum_{m''=1}^{m-1} d_k^{[m'']}.
\end{align}
Rearranging \eqref{eq:general-poly-tin-gdof-relaxed}, we have 
\begin{align}
 \sum_{m''=1}^{m} d_k^{[m'']} &\le \alpha_{kk}^{[m']} +  r_k^{[1]} - \max \{0, \max_{i: i \neq k} (\alpha_{ki}^{[m']} + r_i^{[1]})\}, \nonumber \\
 & \qquad  \qquad \qquad \forall m \le \pi_k(m'), \; \forall k \in [K], \; \forall m' \in [M].
\end{align}
With respect to $m$, the inequality with $m = \pi_k(m')$ is the dominant one and implies others with $m < \pi_k(m')$, because of the non-negativity of $\{d_{k}^{[m]}\}_{k,m}$.

Hence, we have constructed $\Pc'$ with respect to $(\{d_{k}^{[m]}\}_{k,m}) \in \RR_+^{MK}$ (for some properly chosen parameters $(\{r_{k}^{[m]}\}_{k,m})$), defined by the following inequalities.
\begin{subequations}
\begin{align} 
r_{k}^{[m]} &\le 0, \quad \forall k \in [K], \; \forall m \in [M]\\
d_{k}^{[m]} &\ge 0, \quad \forall k \in [K], \; \forall m \in [M]\\
d_k^{[m]} &= r_k^{[m]} - r_k^{[m+1]}, \quad \forall k \in [K], \; \forall m \in [M] \\
\sum_{m=1}^{\pi_k(m')} d_k^{[m]} &\le \alpha_{kk}^{[m']} +  r_k^{[1]} - \max \{0, \max_{i: i \neq k} (\alpha_{ki}^{[m']} + r_i^{[1]})\}, \quad \forall k \in [K], \; \forall m' \in [M],
\end{align}
\end{subequations}
where the additional constraint \eqref{eq:additional-cont-general} makes $\Pc'$ no larger than $\Pc$, i.e., $\Pc' \subseteq \Pc$.  

\subsubsection{Proof of $\Pc' = \mathcal{P}^*$}
Next, we eliminate $\{r_k^{[m]}\}_{k,m}$ in $\Pc'$ and show that it becomes $\mathcal{P}^*$.

Due to the imposed power relation in \eqref{eq:additional-cont-general}, $\{r_k^{[m]}, m \ge 2\}_k$ can be recursively computed and we only need to focus on the existence of $\{r_k^{[1]}\}$ (for the basic messages) with regard to
\begin{align} 
 \sum_{m=1}^{\pi_k(m')} d_k^{[m]} \le \alpha_{kk}^{[m']} + r_k - \max \{0, \max_{j: j \neq k} (\alpha_{kj}^{[m']} + r_j)\}, \quad \forall k \in [K], \; \forall m' \in [M]
\end{align}
where we set $r_k^{[1]}=r_k$ for the sake of notational brevity.
}

For a given GDoF tuple $(\{d_k^{[m]}\}_{k,m}) \in \RR_+^{MK}$, it is feasible in $\Pc'$ if and only if there exist $\{r_k\}_k$'s satisfying
 \begin{subequations}
\begin{align}
r_k &\le 0,\\
r_k &\ge \sum_{m=1}^{\pi_k(m')} d_k^{[m]} - \alpha_{kk}^{[m']}, \quad \forall m' \in [M] \\
r_k -r_j &\ge \sum_{m=1}^{\pi_k(m')} d_k^{[m]} - \alpha_{kk}^{[m']} + \alpha_{kj}^{[m']},  \quad  \forall m' \in [M], \; \forall j \neq k.
\end{align} 
\end{subequations}

Similarly to the 3-user example, to verify the existence of $\{r_k\}_k$, we construct a potential digraph to ensure the existence of a valid potential function. 
For the simplicity of presentation, we only focus on the labeled multi-digraph. For the general $K$-user interference channel with $M$ states, the labeled multi-digraph $D=(V,A)$ is such that
\begin{align}
V &= \{u, v_1, v_2, \dots, v_K\} \\
A &=\{(u,v,m',l): u,v \in V, m' \in [M], l \in \RR\}
\end{align}
where $m'$ specifies the label of an arc and $l=l^{[m']}(u,v)$ is the length between $u,v \in V$ with label $m'$. The arc set consists of three parts $A = A_1 \cup A_2 \cup A_3$, where
\begin{subequations}
\begin{align}
A_1 &= \{(u,v_k,,l(u,v_k)): k \in [K]\} \\
A_2 &= \{(v_k,u,m',l^{[m']}(v_k,u)): m' \in [M], k \in [K]\}\\
A_3 &= \{(v_k,v_j,m',l^{[m']}(v_k,v_j)): m' \in [M], k, j \in [K], k \ne j\} 
\end{align}
\end{subequations}
with a length $l^{[m']}(a,b)$ assigned to every single arc $(a,b) \in \Ac$ as follows:
\begin{subequations}
\begin{align} 
l(u,v_k) &= 0\\
l^{[m']}(v_k,u) &= \alpha_{kk}^{[m']} - \sum_{m=1}^{\pi_k(m')} d_k^{[m]}, \quad \forall m' \in [M]\\
l^{[m']}(v_k,v_j) &= \alpha_{kk}^{[m']} - \sum_{m=1}^{\pi_k(m')} d_k^{[m]} - \alpha_{kj}^{[m']} \quad \forall m' \in [M]
\end{align}
\end{subequations}

According to the potential theorem \cite[Th. 8.2]{Book-Comb-Opt}, % 
by imposing that the lengths of the shortest directed circuits in the labeled multi-digraph $D$ are non-negative, the existence of a potential function is guaranteed. The imposed non-negativity lends itself to the identification of $\Pc'$ without involving $\{r_k\}_k$.
\begin{itemize}
\item Class I: Directed circuits in the form of $(u \xrightarrow{} v_k \xrightarrow{[m']} u)$ for all $m' \in [M]$ and $k \in [K]$.
\begin{align} \label{eq:individual-general}
\alpha_{kk}^{[m']} - \sum_{m=1}^{\pi_k(m')} d_k^{[m]} \ge 0 &\Leftrightarrow \sum_{m=1}^{\pi_k(m')} d_k^{[m]} \le \alpha_{kk}^{[m']}.
\end{align}

\item Class II: For all $k' \in [K] \backslash \{1\}$ and $(i_1, i_2, \dots, i_{k'}) \in \Pi_{k'}$, directed circuits in the form of $(v_{i_1} \xrightarrow{[m_{i_1}]} v_{i_2} \xrightarrow{[m_{i_2}]} \dots \xrightarrow{[m_{i_{k'-1}}]} v_{i_k'} \xrightarrow{[m_{i_{k'}}]} v_{i_1})$ for all $(m_{i_1},m_{i_2}, \dots, m_{i_{k'}})\in [M]^{k'}$.
\begin{subequations} \label{eq:all-v-cycle-general}
\begin{align} 
\MoveEqLeft \sum_{k=1}^{k'} \left( \alpha_{i_k i_k}^{[m_{i_k}]} - \sum_{m=1}^{\pi_{i_k}(m_{i_k})} d_k^{[m]} - \alpha_{i_k i_{k+1}}^{[m_{i_k}]} \right) \ge 0\\
&  \Leftrightarrow \sum_{k=1}^{k'} \sum_{m=1}^{\pi_{i_k}(m_{i_k})} d_k^{[m]}  \le \sum_{k=1}^{k'} \left( \alpha_{i_k i_k}^{[m_{i_k}]} - \alpha_{i_k i_{k+1}}^{[m_{i_k}]} \right).
\end{align}
\end{subequations}
\item Class III: For all $k' \in [K] \backslash \{1\}$ and $(i_1, i_2, \dots, i_{k'}) \in \Pi_{k'}$, directed circuits in the form of $(u \xrightarrow{} v_{i_1} \xrightarrow{[m_{i_1}]} v_{i_2} \xrightarrow{[m_{i_2}]} \dots \xrightarrow{[m_{i_{k'-1}}]} v_{i_k'} \xrightarrow{[m_{i_{k'}}]}  u)$ for all $(m_{i_1},m_{i_2}, \dots, m_{i_{k'}})\in [M]^{k'}$.
\begin{subequations}
\begin{align} 
\MoveEqLeft \sum_{k=1}^{k'-1} \left( \alpha_{i_k i_k}^{[m_{i_k}]} - \sum_{m=1}^{\pi_{i_k}(m_{i_k})} d_k^{[m]} - \alpha_{i_k i_{k+1}}^{[m_{i_k}]} \right) + \left( \alpha_{i_{k'} i_{k'}}^{[m_{i_{k'}}]} - \sum_{m=1}^{\pi_{i_{k'}}(m_{i_{k'}})} d_{k'}^{[m]} \right) \ge 0\\
&  \Leftrightarrow \sum_{k=1}^{k'} \sum_{m=1}^{\pi_{i_k}(m_{i_k})} d_k^{[m]}  \le \sum_{k=1}^{k'} \left( \alpha_{i_k i_k}^{[m_{i_k}]} - \alpha_{i_k i_{k+1}}^{[m_{i_k}]} \right) + \alpha_{i_{k'} i_{1}}^{[m_{i_{k'}}]}
\end{align}
\end{subequations}
which is implied by \eqref{eq:all-v-cycle-general}.
\end{itemize}
}
It is not hard to verify that apart from the circuits above mentioned, there are no other shortest directed circuits. By far, we have simplified $\Pc'$ such that it is represented with respect only to $\{d_k^{[m]}\}_{k,m}$. %
Collecting the inequalities of \eqref{eq:individual-general} and \eqref{eq:all-v-cycle-general}, we find that $\Pc' = \Pc^*$, when the TIN optimality condition \eqref{eq:tin-optimal} is satisfied. The achievability proof is thus complete.

\section{Converse}
\label{sec:converse}
{For the converse, instead of starting from Fano's inequality and upper-bounding the sum rate, we cast our problem to a set of regular interference channels for which the optimal GDoF regions under TIN-optimality conditions have been characterized in \cite{Geng_TIN}. In this way, we directly collect the sum GDoF constraints therein to form our GDoF region outer bound. In doing so, the converse proof can be significantly simplified.}

We use the set of channel coefficients to indicate different states, i.e.,
\begin{align}
\text{State $m$: } \Hc^{[m]} \defeq \left\{ (\{h_{ki}^{[m]}\}_{i,k}) \right\} \subseteq \CC^{K^2}
\end{align}
where at State $m$, Receiver $k$ wishes to recover messages $\{W_k^{[l]}\}_{l=1}^{\pi_k(m)}$.
Further, we define $\tilde{\mv}=(m_{i_1},\dots,m_{i_K})$ and introduce
\begin{align}
\text{State $\tilde{\mv}$: } \tilde{\Hc}^{[\tilde{\mv}]} \defeq \left\{ \Hc_{1}^{[m_{i_1}]}, \Hc_{2}^{[m_{i_2}]}, \dots, \Hc_{K}^{[m_{i_K}]} \right \}  \subseteq \CC^{K^2}
\end{align}
where $\tilde{\mv} \in [M]^K$, $\Hc_{k}^{[m_{i_k}]}\defeq \left\{ (\{h_{ki}^{[m_{i_k}]}\}_{i}) \right\}$ and at State $\tilde{\mv}$, Receiver $k$ wishes to recover messages $\{W_k^{[l]}\}_{l=1}^{\pi_k(m_{i_k})}$. Apparently, $ \Hc^{[m]}$ is a realization of $ \tilde{\Hc}^{[\tilde{\mv}]}$ when $\tilde{\mv}=(m, m, \dots,m)$.

According to the construction of the states, besides $M$ original states, we also introduce $M_K-M$ auxiliary states. We make the following statement.
\begin{lemma} \label{lemma:aux-state-decode}
Any message set in the $M$-state Gaussian interference channel (GIC) defined by $\{\Hc^{[m]}\}_m$ can be decoded if and only if the same message set can be decoded in the $M^K$-state GIC defined by $\{\tilde{\Hc}^{[\tilde{\mv}]}, \tilde{\mv} \in [M]^K \}$.
\end{lemma}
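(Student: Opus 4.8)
The plan is to avoid Fano's inequality entirely and instead invoke the elementary principle that a code's reliability at a given receiver depends only on the \emph{marginal} channel law into that receiver, together with the simple combinatorial observation that the collection of per-receiver channels appearing across the $M^K$ mixed states is exactly the same as the one appearing across the $M$ original states. Concretely, recall from the system model that each transmitter uses a \emph{single} encoding map $f_i$ (independent of the state), so the transmitted codewords $\tilde{X}_i^n$ are a fixed function of $(W_i^{[1]},\dots,W_i^{[M]})$ regardless of which state the network is in.

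I would first dispose of the ``if'' direction, which is immediate: the $M$ original states are recovered as the special mixed states $\tilde{\mv}=(m,m,\dots,m)$, so $\{\Hc^{[m]}\}_m$ is a sub-collection of $\{\tilde{\Hc}^{[\tilde{\mv}]}\}_{\tilde{\mv}\in[M]^K}$. Hence any code achieving $P_e^{(n)}\to 0$ on all $M^K$ states, restricted to these diagonal states (with decoders $g_k^{[m]}\defeq g_k^{[(m,\dots,m)]}$), is a code for the $M$-state channel whose error event is a sub-union of the $M^K$-state error event, and therefore has error probability that is only smaller.

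For the ``only if'' direction, suppose a rate tuple is achievable in the $M$-state channel, witnessed by a sequence of codes $(\{f_i\}_i,\{g_k^{[m]}\}_{k,m})$ with $P_e^{(n)}\to 0$. Since $P_e^{(n)}$ dominates, for every original state $m'$ and every receiver $k$, the ``single-receiver'' error probability $\epsilon_{k,m'}^{(n)}\defeq \Pr\!\big[\{W_k^{[1:\pi_k(m')]}\}\neq g_k^{[m']}(Y_{k,m'}^n)\big]\to 0$. The key point is that $\epsilon_{k,m'}^{(n)}$ is a functional only of the encoders $\{f_i\}_i$, the decoder $g_k^{[m']}$, and the marginal channel $\Hc_k^{[m']}$ (the $k$-th row of coefficients plus unit-variance AWGN), and does not see $h_{k'i}^{[m']}$ for $k'\neq k$. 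Now equip the $M^K$-state channel with the \emph{same} encoders $\{f_i\}_i$ and the decoders $g_k^{[\tilde{\mv}]}\defeq g_k^{[m_{i_k}]}$ (these are well-typed, since at state $\tilde{\mv}$ receiver $k$ observes $Y_k^{[m_{i_k}]n}$ and must recover $\{W_k^{[l]}\}_{l=1}^{\pi_k(m_{i_k})}$). Because receiver $k$'s marginal channel at mixed state $\tilde{\mv}$ is literally $\Hc_k^{[m_{i_k}]}$ and the encoders are unchanged, its single-receiver error probability at state $\tilde{\mv}$ equals $\epsilon_{k,m_{i_k}}^{(n)}$. A union bound over the $M^K$ states and $K$ receivers then gives
\begin{align}
\tilde{P}_e^{(n)} \;\le\; \sum_{\tilde{\mv}\in[M]^K}\sum_{k=1}^{K}\epsilon_{k,m_{i_k}}^{(n)} \;\le\; M^K\sum_{k=1}^{K}\sum_{m'=1}^{M}\epsilon_{k,m'}^{(n)} \;\longrightarrow\; 0 \qquad (n\to\infty),
\end{align}
a finite sum of vanishing terms, so the same rate tuple is achievable in the $M^K$-state channel.

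The step requiring the most care is making the ``marginal-only'' claim precise: one must extract from the system-model definitions that (i) the encoding maps are genuinely state-independent and (ii) the AWGN at each receiver has the same law (zero mean, unit variance, i.i.d.\ over time) in every state, so that the conditional distribution $p(y_k^n\mid \tilde{x}_1^n,\dots,\tilde{x}_K^n)$ induced at receiver $k$ depends on the global state $\tilde{\mv}$ only through the restriction $\Hc_k^{[m_{i_k}]}$. Everything else is routine union-bound bookkeeping, and in particular no achievability scheme is needed — the argument is purely a statement about how error events decompose across receivers.
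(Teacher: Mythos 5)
Your proof is correct and follows essentially the same route as the paper: both directions rest on the observation that each receiver's error probability depends only on its marginal channel law (the $k$-th row of coefficients plus the i.i.d.\ unit-variance AWGN) together with the state-independent encoders, so the decoders $g_k^{[m_{i_k}]}$ can be reused verbatim at every mixed state. Your explicit union bound over the $M^K$ states and $K$ receivers is a slightly tidier piece of bookkeeping than the paper's chain of probability inequalities, but it is the same argument.
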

\begin{proof}
The ``if'' part is readily obtained, because the states $\{\Hc^{[m]}\}_m$ are a subset of the states $\{\tilde{\Hc}^{[\tilde{\mv}]}, \tilde{\mv} \in [M]^K \}$. Thus, the messages decoded in the latter can be decoded in the former.

For the ``only if'' part, we need to show that, if the messages are decodable in the $M$-state GIC, these messages are also decodable in all the auxiliary states. 
Consider State $m' \in [M]$ in the $M$-state GIC such that the message set $\{{W}_k^{[m]}\}_{m=1}^{\pi_k(m')}$ at every receiver $k$ can be decoded. Then the average probability of error satisfies
\begin{align}
\lim_{n \to \infty} \Pr\left(\{{W}_k^{[1:\pi_k(m')]}\}_k \neq \{\hat{W}_k^{[1:\pi_k(m')]}\}_k \right) = 0, \label{eq:pe1}
\quad \text{for all $m' \in [M]$}
\end{align}
given the encoding and decoding mappings $\tilde{X}_i(t) = f_i(\{W_i^{[m]}\}_m)$ for all $i \in [K]$ and $\{\hat{W}_k^{[m]}\}_{m=1}^{\pi_k(m')} = g_k^{[m']} (Y_{k,m'}^n)$ for all $k \in [K]$.

Without loss of generality, we focus on a specific auxiliary state $\tilde{\mv}=(m_{i_1},\dots,m_{i_K})$ in the $M^K$-state GIC, where the input-output relation is as follows:
\begin{align}
\bar{Y}_k^{[m_{i_k}]}(t) = \sum_{i=1}^{K}{h}_{ki}^{[m_{i_k}]} \bar{X}_i(t) + \bar{Z}_k^{[m_{i_k}]}(t).
\end{align}
We impose that $\bar{X}_i(t)=\tilde{X}_i(t)=f_i(\{W_i^{[m]}\}_m)$ for all $i \in [K]$, i.e., the input $\bar{X}_i(t)$ in the $M^K$-state GIC has the same encoding mapping applied at each transmitter as used in the $M$-state GIC. Thus, the received signal for Receiver $k$ can be rewritten as
\begin{align}
\bar{Y}_k^{[m_{i_k}]}(t) &= \sum_{i=1}^{K}{h}_{ki}^{[m_{i_k}]} \tilde{X}_i(t) + \bar{Z}_k^{[m_{i_k}]}(t)\\
&\sim \sum_{i=1}^{K}{h}_{ki}^{[m_{i_k}]} \tilde{X}_i(t) + {Z}_k^{[m_{i_k}]}(t) = {Y}_k^{[m_{i_k}]}(t)
\end{align}
where $A \sim B$ means that $A$ and $B$ are statistically equivalent. So the received signal in the $M^K$-state GIC is statistically equivalent to that in the $M$-state GIC. Applying the same decoding mapping $g_k^{[m_{i_k}]}$ as that in State $m_{i_k}$ of the $M$-state GIC, we have
\begin{align}
\MoveEqLeft \lim_{n \to \infty} \Pr\left(\{\{{W}_k^{[m]}\}_{m=1}^{\pi_{k}(m_{i_k})}\}_k \neq \{\{\hat{W}_k^{[m]}\}_{m=1}^{\pi_{k}(m_{i_k})}\}_k \right)\\
 & =\lim_{n \to \infty}  \Pr\left(\{\{{W}_k^{[m]}\}_k\}_{m=1}^{\pi_{k}(m_{i_k})} \neq \{\{\hat{W}_k^{[m]}\}_k \}_{m=1}^{\pi_{k}(m_{i_k})} \right)\\
 & \le \lim_{n \to \infty}  \Pr\left(\{\{{W}_k^{[m]}\}_k\}_{m=1}^{\max_k \pi_{k}(m_{i_k})} \neq \{\{\hat{W}_k^{[m]}\}_k \}_{m=1}^{\max_k \pi_{k}(m_{i_k})} \right)\\
 & = \lim_{n \to \infty}  \Pr\left(\{\{{W}_k^{[m]}\}_{m=1}^{\max_k \pi_{k}(m_{i_k})}\}_k \neq \{\{\hat{W}_k^{[m]} \}_{m=1}^{\max_k \pi_{k}(m_{i_k})}\}_k \right)\\
 &\overset{(\ref{eq:pe1})}{=} 0.
\end{align}
Therefore the messages can indeed be decoded at the auxiliary states. This completes the proof.
\end{proof}

Whether the messages can be decoded at a receiver is determined by the marginal distribution associated to this receiver if there is no receiver cooperation. Thus the same message set can be decoded in both the $M$-state GIC and the $M^K$-state GIC as the receivers in the two networks see the same marginal channel transition probabilities. Similar statements have been used extensively in network information theory literature (e.g.,  \cite[Lemma 5.1]{NIT}, \cite[Proposition 2]{RPV09}).

By Lemma \ref{lemma:aux-state-decode}, we conclude that the achievable rate tuple $(\{R_{k}^{[m]}\}_{m,k})$ in the $M$-state GIC should satisfy the sum rate constraints in the $M^K$-state GIC. Given a state $\tilde{\mv}=(m_{i_1},\dots,m_{i_K})$, we treat the set of messages $\{W_{i_k}^{[m]}\}_{m=1}^{\pi_{i_k}(m_{i_k})}$ as a single virtual message $\tilde{W}_{i_k}$. Let $d_{i_k}$ be the GDoF of $\tilde{W}_{i_k}$. As such, we have $d_{i_k}:=\sum_{m=1}^{\pi_{i_k}(m_{i_k})}  d_{i_k}^{[m]}$.
Such a state is a regular interference channel with messages $\{\tilde{W}_{i_k}\}_{k}$ and here the TIN optimality condition is satisfied (refer to \eqref{eq:tin-optimal}), so by Theorem 1 of \cite{Geng_TIN} the GDoF tuple $(\{d_{i_k}\}_{k})$ should satisfy
\begin{subequations} \label{eq:outer-bound}
\begin{align}
 d_{i_k} &\le \alpha_{i_ki_k}^{[m_{i_k}]} , \quad \forall k \label{eq:ind-sum-gdof-1state}\\
 \sum_{k=1}^{k'}  d_{i_k} &\le  \sum_{k=1}^{k'} (\alpha_{i_k i_k}^{[m_{i_k}]} - \alpha_{i_k i_{k+1}}^{[m_{i_k}]}), \label{eq:sum-gdof-1state} \\
  &\forall  (i_1, i_2, \dots, i_{k'}) \in \Pi_{k'}, \forall k' \in [K] \backslash \{1\}.  
 \end{align}
 \end{subequations}
 
Collecting all inequalities for all possible states $(m_{i_1}, m_{i_2}, \dots, m_{i_{k'}}) \in [M]^{k'}$, we have that the constraints of \eqref{eq:ind-sum-gdof-1state} for all possible states are equivalent to (\ref{eq:e1}). For a specific state $(m_{i_1}, m_{i_2}, \dots, m_{i_{k'}}) \in [M]^{k'}$, the constraints \eqref{eq:sum-gdof-1state} match exactly those in (\ref{eq:e2}). The outer bound proof is thus complete.

\section{Conclusion}
Motivated by the need to communicate with a higher rate when channels are in better conditions (i.e., opportunistic communications), we consider a $K$-user interference network with multiple channel states and degraded message sets, where each transmitter has a set of messages (ordered by their priorities) and each receiver will decode a number of messages up to a pre-determined threshold on the message order, depending on the channel state. For this channel with states, we show that if each sub-network (comprised of receivers from possibly distinct states) satisfies a TIN-optimality condition, then simple layered superposition encoding and successive cancelation based opportunistic TIN decoding achieves the entire GDoF region, for all possible decoding thresholds at each receiver.

\end{document}